\documentclass[journal]{IEEEtran}

\usepackage[noadjust]{cite}

\usepackage{color,xcolor}

\ifCLASSINFOpdf
  \usepackage[pdftex]{graphicx}
  \graphicspath{{/}}
\else
\fi

\usepackage{amsmath,amssymb}
\interdisplaylinepenalty=2500

\ifCLASSOPTIONcompsoc
 \usepackage[caption=false,font=normalsize,labelfont=sf,textfont=sf]{subfig}
\else
 \usepackage[caption=false,font=footnotesize]{subfig}
\fi

\usepackage{url}

\hyphenation{di-par-ti-men-to in-ge-gne-ria in-du-stria-le}

\usepackage{array}
\usepackage[]{booktabs}
\usepackage[fleqn]{empheq}
\DeclareMathOperator*{\sign}{sign}
\DeclareMathOperator*{\SGN}{SGN}
\DeclareMathOperator*{\sat}{sat}
\newcommand\real{\ensuremath{{\mathbb R}}}
\newcommand\Di{\ensuremath{D_+}}
\newcommand{\smallmat}[1]{\left[ \begin{smallmatrix}#1
    \end{smallmatrix} \right]}
\newcommand{\bigmat}[1]{\begin{bmatrix} #1 \end{bmatrix}}
\newcommand\A{\ensuremath{{\mathcal A}}}
\newcommand\K{\ensuremath{{\mathcal K}}}

\newcommand\M{\ensuremath{{\mathcal M}}}

\newcommand\dz{\ensuremath{\mathrm{dz}_{f_c}}}

\newcommand\ball{{\mathbb B}}
\newtheorem{thm}{Theorem}
\newenvironment{theorem}{\begin{thm}\rm }{\end{thm}}
\newtheorem{remm}{Remark}
\newenvironment{remark}{\begin{remm}\rm }{\hfill \hspace*{1pt} \hfill $\lrcorner$\end{remm}}
\newtheorem{defi}{Definition}

\newtheorem{lem}{Lemma}
\newenvironment{lemma}{\begin{lem}\rm }{\end{lem}}
\newtheorem{fac}{Fact}
\newenvironment{fact}{\begin{fac}\rm }{\end{fac}}
\newtheorem{propo}{Proposition}
\newenvironment{proposition}{\begin{propo}\rm }{\end{propo}}
\newtheorem{cla}{Claim}
\newenvironment{claim}{\begin{cla}\rm }{\end{cla}}
\newtheorem{coroll}{Corollary}
\newenvironment{corollary}{\begin{coroll}\rm }{\end{coroll}}
\newtheorem{ass}{Assumption}
\newenvironment{assumption}{\begin{ass}\rm }{\end{ass}}
\newenvironment{proof}{\noindent {\em Proof.}}{\hfill \hspace*{1pt}\hfill $\square$\\}
\newenvironment{proofof}{\noindent {\em Proof of }}{\hfill \hspace*{1pt}\hfill $\square$\\}

\begin{document}
\title{Global asymptotic stability\\ of a PID control system with Coulomb friction}
\author{Andrea~Bisoffi, Mauro~Da~Lio, Andrew~R.~Teel and~Luca~Zaccarian
\thanks{Research supported in part by AFOSR grant FA9550-15-1-0155 and NSF grant ECCS-1508757, by the ANR project LimICoS contract number 12 BS03 005 01, and by grant OptHySYS funded by the University of Trento.}
\thanks{A. Bisoffi and M. Da Lio are with Dipartimento di Ingegneria Industriale, University of Trento, Italy \texttt{\{andrea.bisoffi, mauro.dalio\}@unitn.it}}
\thanks{A. R. Teel is with the Department of Electrical and Computer Engineering, University of California, Santa Barbara, CA 93106, USA \texttt{teel@ece.ucsb.edu}}
\thanks{L. Zaccarian is with CNRS, LAAS, 7 avenue du Colonel Roche, F-31400 Toulouse, France and Universit\'e de Toulouse, 7 avenue du Colonel Roche, 31077 Toulouse cedex 4, France, and Dipartimento di Ingegneria Industriale, University of Trento, Italy \texttt{zaccarian@laas.fr}}%
}

\maketitle

\begin{abstract}
We propose a model for representing a point mass subject to Coulomb friction in feedback with a PID controller, based on a differential inclusion comprising all the possible magnitudes of static friction during the stick phase. 
For this model we study the set of all equilibria and we establish its global asymptotic stability using a discontinuous Lyapunov-like function, and a suitable LaSalle's invariance principle.
We finally use well-posedness of the proposed model to establish useful robustness results, including an ISS property from a suitable input in a perturbed context. Simulation results are also given to illustrate our statements.
\end{abstract}

\IEEEpeerreviewmaketitle

\section{Introduction}

\IEEEPARstart{F}{riction} in mechanical systems has been investigated since the times of Leonardo da Vinci, Guillaume Amontons, Charles-Augustin de Coulomb and Arthur-Jules Morin. Their main findings acknowledge that, for a moving mass, the friction force is proportional to the normal force through a kinetic coefficient (Coulomb friction) and presents possibly a term proportional to the velocity (viscous friction), whereas at rest the friction force is bounded by the product of the normal force and a static coefficient, generally greater than the kinetic coefficient.

Within the control community, the interest in the dynamical properties of friction had its peak in the 1990's, and the control engineering reasons for this interest are lucidly argued in \cite[\S1]{olsson1998friction}.
These properties have been studied along a modeling direction, where we mention the Dahl model \cite{dahl1968solid}, the LuGre model \cite{canudas1995new,astrom2008revisiting}, the models by Bliman and Sorine \cite{bliman1995easy} and the Leuven model \cite{swevers2000integrated}. The characteristics of all these models are also detailed in~\cite{ferretti2004single}. When a mass moves with steady velocity and the corresponding friction force is measured, there is a small interval of velocities near zero where the friction force decreases before increasing again due to viscous friction and this behaviour is given the name of \emph{Stribeck effect}. Another experimental observation is the distinction between the two motion regimes of presliding and sliding. During presliding, the friction force is better described as a function of the (microsliding) displacement (see \cite{rabinowicz1951nature}), which is intuitively due to the asperity junctions that build up between the surfaces and that behave like stiff springs \cite{ferretti2004single}. After a critical value of displacement (and a break-away force) is reached, the sliding motion can begin. This property that the friction force is only position dependent is called \emph{rate independence} \cite{astrom2008revisiting}, and is to be found in the Dahl model \cite[\S4.1]{olsson1998friction}, and in the models by Bliman and Sorine \cite[\S5.1]{olsson1998friction}. In the latter ones, considering friction as depending only on the path allows using the theory of hysteresis operators \cite{visintin2013differential,krasnoselski1989systems}. On the other hand, rate dependence holds for the LuGre model \cite{canudas1995new}. As a final remark, the LuGre model itself proved to be amenable to theoretical analysis, as \cite{barahanov2000necessary} presents necessary and sufficient conditions for the passivity of its underlying operator from velocity to friction force.

In this work, we propose to characterize Coulomb friction in terms of differential inclusions \cite{aubin1984differential}, and we apply this characterization to the case of a point mass under such a friction force and actuated by a proportional-integral-derivative (PID) controller. This problem is a classical one in the friction literature (together with the point mass on a moving belt) and we will be able to prove the global asymptotic stability of the attractor having zero velocity, zero position and a bounded integral error. The use of a set-valued map for the friction force can be seen as quite natural and is taken into consideration in~\cite{bliman1995easy,vandewouw2004attractivity,putra2007analysis}: in~\cite{vandewouw2004attractivity} it is applied to uncontrolled multi-degree-of-freedom mechanical systems, in~\cite{putra2007analysis} to a PD controlled 1 degree-of-freedom system. The combination of set-valued friction laws and Lyapunov tools is also the subject of~\cite[Chap.~5-6]{leine2007stability}.

To the best of the authors' knowledge, global asymptotic stability has not been proved so far. In particular, it was proved (see \cite[Thm.~1]{armstrong1996pid} and the related works \cite{armstrong1993techRep,armstrong1994pid}) that in our same setting there exists no stick-slip limit cycle (the so-called hunting phenomenon), which is the detrimental signature of a stiction greater than the Coulomb friction.
As an overall achievement, Lyapunov tools applied to a differential-inclusion model enable proving global asymptotic stability of the largest set of equilibria.
Additionally, the established properties and the regularity of our model imply robustness of asymptotic stability. This, in turn, allows us to prove an input-to-state stability (ISS) property for the perturbed dynamics, establishing that more general friction phenomena (including the Stribeck effect) cause a gradual deterioration of the response, in an ISS sense.
We regard this work as a stepping stone to stiction larger than Coulomb and to its description through hybrid systems \cite{goebel2012hybrid}, and to proposing compensation schemes using hybrid friction laws.

The paper is structured as follows. We present the proposed model and the main results in Section~\ref{sec:main}. Then Section~\ref{sec:sims} contains an illustration by simulation of the established properties. The end of the paper contains the Lyapunov-based proof, separated into the proof of global attractivity (Section~\ref{sec:attr}) and of stability (Section~\ref{sec:stab}).

\textit{Notation.} The sign function is defined as: $\sign(x):=1$ if $x>0$, $\sign(0):= 0$, $\sign(x):=-1$ if $x<0$. The saturation function is defined as: $\sat(x):=\sign(x)$ if $|x|>1$, $\sat(x):=x$ if $|x|\le1$. For $c\neq 0$, the function $x\mapsto \text{dz}_c(x)$ is defined as $\text{dz}_c(x):=x-c\sat\big(\frac{x}{c}\big)$. $|x|$ denotes the Euclidean norm of vector $x$. $\langle \cdot, \cdot \rangle$ defines the scalar product between its two vector arguments.

\begin{figure}[ht]
\centering
\includegraphics[width=.8\linewidth]{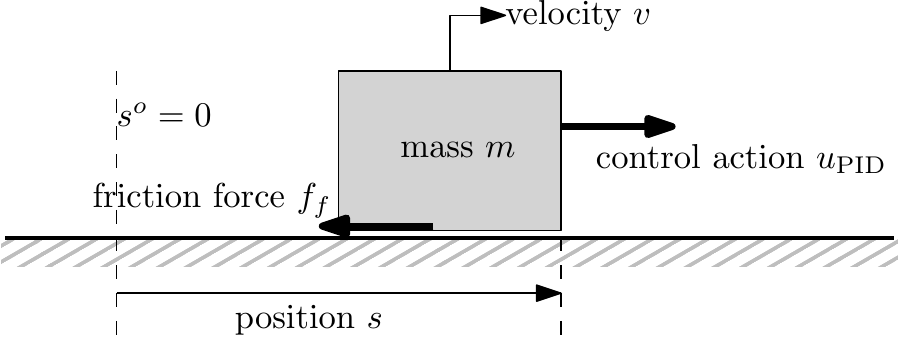}
\caption{Mass under the action of friction and controlled by a PID controller.}
\label{fig:MaPIDscheme}
\end{figure}

\section{Proposed model and main result}
\label{sec:main}

\subsection{Derivation of the model}

Consider a point mass $m$ described by position $s$ and velocity $v$, as in Figure~\ref{fig:MaPIDscheme}. The overall friction force $f_f$ acting on the mass comprises both Coulomb and viscous friction. Its classical description (see \cite[Eq.~(3)]{armstrong1993techRep}, or similarly \cite[Eq.~(5)]{olsson1998friction}) is parametrized by a Coulomb friction constant $\bar f_c>0$ and by the viscous friction constant $\alpha_v>0$. The expression of $f_f$ reads
\begin{equation}
\label{eq:frictionforce}
f_f(f_r,v) \! := \!\! \begin{cases}
\bar f_c \sign(v) + \alpha_v v, \!\!\!\!&  \text{if } v \neq 0\\
f_r, & \text{if }  v = 0,\, |f_r| < \bar f_c\\
\bar f_c \sign(f_r), & \text{if }  v = 0,\, |f_r| \ge \bar f_c\\
\end{cases}
\end{equation}
where $f_r$ is the resultant tangential force. The mass is actuated by the PID control $u_\text{PID}$
\begin{equation}
\label{eq:uPID}
\begin{aligned}
u_{\text{PID}}(t)  & := - \bar k_p s(t)-\bar k_i \int_0^t{s(\tau)d\tau}-\bar k_d \frac{ds(t)}{dt} \\
        & = - \bar k_p s(t) - \bar k_i e_i(t)  -\bar k_d v(t),
\end{aligned}
\end{equation}
where $e_i$ is defined to be the integral of the position error and is the state of the controller, satisfying $\dot e_i = s$ and $e_i(0)=0$.

Using Newton's law, we write the mechanical dynamics $\dot s = v$ and $m \dot v = u_\text{PID} - f_f(u_\text{PID},v)$. The convenient definitions $u := \frac{u_\text{PID} - \alpha_v v}{m}$, $(k_p,k_v,k_i) := (\frac{\bar k_p}{m}, \frac{\bar k_d+\alpha_v}{m}, \frac{\bar k_i}{m})$ and $f_c:=\frac{\bar f_c}{m}$ yield then
\begin{subequations}
\label{eq:basic}
\begin{align}
\label{eq:pdot}
&\dot e_i = s \\
&\dot s = v \\
\label{eq:vDot}
&\dot v  = 
\begin{cases}
u- f_c &  \text{ if } v> 0 \text{ or } (v=0,u\ge f_c)\\
0 &  \text{ if } (v=0,|u| < f_c)\\
u+ f_c &  \text{ if } v< 0 \text{ or } (v=0,u\le -f_c)\\
\end{cases}\\
& u = -k_p s - k_v v -k_i e_i, \label{eq:u}
\end{align}
\end{subequations}
where we used that $u_\text{PID}= m\,u$ for $v=0$.

Model \eqref{eq:basic} arises from a relatively intuitive description of the mechanical principles behind the model of Figure~\ref{fig:MaPIDscheme}.
Its discontinuous right hand side makes it hard to prove existence of solutions for any initial conditions, even though such a property can be shown to hold on a case-by-case basis. Moreover, it seems to be hard to use dynamics \eqref{eq:basic} for establishing some stability properties and certifying that the position $s$ converges to zero.

In this paper we use the monotone set-valued friction law \cite[Eq.~5.36]{leine2007stability} for which existence of solutions is structurally guaranteed. Defining the overall state  $z := (e_i,s,v)$, this is equivalent to applying the Filippov \cite{filippov2013differential} or Krasovskii regularization to the discontinuous dynamics~\eqref{eq:basic} and obtain
\begin{subequations}
\label{eq:xPhDot}
\begin{equation}
\label{eq:Fdynamics}
\dot{z}
\in
\bigmat{s\\ v\\ - k_i e_i - k_p s   -k_v v} 
 - f_c \bigmat{0 \\ 0 \\ 1}\SGN(v) 
\end{equation}
where the function $\SGN$ is a set-valued map defined as
\begin{equation}
\label{eq:frictionforceSetValued}
\SGN(v):=\begin{cases} \sign(v), & \text{ if }v \neq 0\\ [-1,1], & \text{ if } v =0.\end{cases}
\end{equation}
\end{subequations}

Note that model~\eqref{eq:xPhDot} recognizes that the Coulomb friction can be selected as any force in the set $[-\bar f_c,\bar f_c]$ when $v$ is zero and has magnitude $\bar f_c$ and direction opposite to $v$ whenever $v \neq 0$. One may wonder whether any artificial solution is introduced by such an enriched description of the dynamics. (For~\eqref{eq:basic} or \eqref{eq:xPhDot}, we consider a solution to be any locally absolutely continuous function $x$ that satisfies respectively $\dot{x}(t)=f(x(t))$ or $\dot{x}(t) \in F(x(t))$ for almost all $t$ in its domain.) The following result establishes uniqueness of the solutions to~\eqref{eq:xPhDot}, which implies that the unique solution to~\eqref{eq:xPhDot} must necessarily be the unique solution to~\eqref{eq:basic}. Indeed, dynamics~\eqref{eq:basic} allows for only some selections of $\dot v$ compared to those allowed by~\eqref{eq:xPhDot}, so that any solution to~\eqref{eq:basic} is also a solution to~\eqref{eq:xPhDot}.

\begin{lemma}
\label{lem:uniq}
For any initial condition $z(0) \in \real^3$, system \eqref{eq:xPhDot} has a unique solution defined for all $t\ge 0$.
\end{lemma}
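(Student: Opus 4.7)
The plan is to prove existence and uniqueness separately, exploiting the fact that the right-hand side of \eqref{eq:Fdynamics} decomposes as a linear drift $Az$ (with $A$ the companion-like matrix associated with the PID closed loop) plus a scalar set-valued perturbation $-f_c (0,0,1)^\top \SGN(v)$ driven only by the velocity component.

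For existence, I would verify that the set-valued map $F$ on the right-hand side of \eqref{eq:Fdynamics} is non-empty, compact, convex-valued, and upper semicontinuous on $\real^3$; all four properties follow directly from \eqref{eq:frictionforceSetValued}. Moreover, $F$ has linear growth, since $|\SGN(v)|\le 1$ uniformly and $Az$ is linear in $z$. Standard existence results for such differential inclusions (e.g.\ \cite{aubin1984differential,filippov2013differential}) then provide local absolutely continuous solutions from every initial condition, and the linear growth bound excludes finite-time escape, yielding completeness on $[0,\infty)$.

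For uniqueness, I would use monotonicity of the set-valued map $\SGN$. Given two solutions $z_1,z_2$ with the same initial condition, measurable selections $\gamma_j(t)\in \SGN(v_j(t))$ are automatically extracted from the identity $\dot v_j(t) = -k_i e_{i,j}(t) - k_p s_j(t) - k_v v_j(t) - f_c \gamma_j(t)$ (since $\dot v_j$ is measurable and the other terms are continuous). The key pointwise inequality is $(v_1(t)-v_2(t))(\gamma_1(t)-\gamma_2(t))\ge 0$, which holds almost everywhere because $\SGN$ is monotone. Setting $V(t):=\tfrac12|z_1(t)-z_2(t)|^2$ and differentiating, the contribution from the $\SGN$ terms is nonpositive and the remaining cross-terms are bilinear in the components of $z_1-z_2$. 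Bounding each by $\tfrac{L}{2}|z_1-z_2|^2$ yields $\dot V(t)\le L V(t)$ almost everywhere, so Gronwall's inequality together with $V(0)=0$ gives $z_1\equiv z_2$.

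The step requiring the most care is the rigorous handling of the set-valued term in the uniqueness argument, specifically the pointwise application of monotonicity of $\SGN$ to the measurable selections $\gamma_j$ extracted from each solution. Once this is in place, both the invocation of classical differential-inclusion theorems for existence and the Gronwall bound for uniqueness reduce to essentially routine computations.
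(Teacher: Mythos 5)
Your proposal is correct and follows essentially the same route as the paper: existence and completeness via the standard regularity (nonempty, compact, convex values, outer/upper semicontinuity) and linear growth of the right-hand side, and uniqueness via monotonicity of $\SGN$ applied to the difference of two solutions, yielding a Gronwall/comparison bound on $\tfrac12|z_1-z_2|^2$. The only cosmetic difference is that you extract measurable selections $\gamma_j$ explicitly, whereas the paper bounds the worst case over all selections via the term $M(t)\le 0$; both are valid implementations of the same monotonicity argument.
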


\begin{proof}
Existence of solutions follows from~\cite[\S7, Thm.~1]{filippov2013differential} because the mapping in~\eqref{eq:xPhDot} is outer semicontinuous and locally bounded with nonempty compact convex values (see also \cite[Prop.~6.10]{goebel2012hybrid}). Completeness of maximal solutions follows from local existence and no finite escape times, as \eqref{eq:xPhDot} can be regarded as a linear system forced by a bounded input.
To prove uniqueness, consider two solutions $z_1=(z_{1,e_i},z_{1,s},z_{1,v})$, $z_2$ both starting at $z_0$ and define $\delta(t)=(\delta_{e_i}(t),\delta_s(t),\delta_v(t)):=z_1(t)-z_2(t)$, for all $t\ge 0$. Then, $\delta(0)=0$ and, for almost all $t \ge 0$,
\begin{equation*}
\dot \delta(t) \in A_\delta \delta(t) -f_c \smallmat{0\\0\\1}\big( \SGN(z_{1,v}(t)) - \SGN(z_{1,v}(t)-\delta_v(t)) \big),
\end{equation*}
with $A_\delta := \smallmat{0 & 1 & 0\\ 0 & 0 & 1\\ -k_i & -k_p & -k_v}$, whose maximum singular value is $\lambda_\delta$. Therefore we can write for almost all $t$
\begin{flalign*}
&\tfrac{d}{dt} \tfrac{|\delta(t)|^2}{2} = \delta(t)^T \dot \delta(t) \le  \lambda_\delta |\delta(t)|^2 + M(t)\\
& M(t):= \!\!\!\!\!\!\!\!\!\!\!\!\!\!  \max_{\substack{f_1\in f_c\SGN(z_{1,v}(t))\\f_2\in f_c \SGN(z_{1,v}(t)-\delta_v(t))}} \!\!\!\!\!\!\!\!\!\!\! \delta_v(t) (f_2 -f_1).
\end{flalign*}

Whether $z_{1,v}(t)$ and $z_{1,v}(t)-\delta_v(t)$ are positive, zero or negative, by trivial inspection of all the cases it can be shown that $M(t)\le 0$ for all $t\ge 0$. Therefore,
\begin{equation*}
\tfrac{d}{dt}\tfrac{|\delta(t)|^2}{2} \le \lambda_\delta |\delta(t)|^2 \text{ for almost all } t\ge 0,
\end{equation*}
and from standard comparison theorems $\delta(0)=0$ implies $\delta(t)=0$ for all $t\ge 0$, that is, $z_1(t)=z_2(t)$ for all $t\ge0$.
\end{proof}

\subsection{Main result}

The advantage in the use of the compact dynamics \eqref{eq:xPhDot} is that we may adopt Lyapunov tools to study the asymptotic stability properties of the rest position under the following standard assumption
(see, e.g., \cite{armstrong1996pid}).
\begin{assumption}
\label{ass:PIDpars}
The parameters in~\eqref{eq:u} are such that
\begin{equation*}
k_i>0,\,k_p>0,\, k_v k_p > k_i.
\end{equation*}
\end{assumption}

According to the Routh stability test, Assumption~\ref{ass:PIDpars} holds if and only if the origin of the dynamics in~\eqref{eq:xPhDot} with $f_{c}=0$ is globally exponentially stable.

Under Assumption~\ref{ass:PIDpars}, one readily sees that all possible equilibria of dynamics \eqref{eq:xPhDot} correspond to $(e_i,s,v)=(\bar e_i,0,0)$ with $|\bar e_i|\le \frac{f_c}{k_i}$, that is, whenever the mass is at rest at zero position and the size of the integral error $e_i$ is bounded by the specific threshold $\frac{f_c}{k_i}$. Any of these points is an equilibrium for~\eqref{eq:xPhDot} because in~\eqref{eq:xPhDot} a value can be selected from $f_c \SGN(0)$ such that the (unique) solution maintains $\dot z$ identically zero. Note that here we consider the problem of tracking a position setpoint $s^o=0$, but this is trivially generalized to piecewise constant setpoints, thanks to the global nature of our results and a trivial change of coordinates.
Denote then the set of these equilibria
as
\begin{equation}
\label{eq:Aph}
\A := \left\{(e_i,s,v) : \ s=0, \ v=0, e_i\in \bigg[-\frac{f_c}{k_i},\frac{f_c}{k_i}\bigg] \right\}.
\end{equation}

Given the attractor $\A$, we prove its global attractivity in Section~\ref{sec:attr} and its stability in Section~\ref{sec:stab}. For the attractivity, we use a suitable discontinuous Lyapunov-like function and a nonsmooth version of LaSalle's invariance principle~\cite[\S4.2]{khalilNonlinear}. These tools are applicable to our scenario because of the desirable structural properties of the regularization in~\eqref{eq:xPhDot}. 

\begin{proposition}
\label{prop:GA+S}
Under Assumption~\ref{ass:PIDpars}, the attractor $\A$ in~\eqref{eq:Aph} is 1) globally attractive and 2) Lyapunov stable for dynamics~\eqref{eq:xPhDot}.
\end{proposition}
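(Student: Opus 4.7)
The plan is to split the argument into the two announced sub-claims: global attractivity via a discontinuous Lyapunov-like function together with a nonsmooth LaSalle principle, and Lyapunov stability via a separate (essentially local) invariance-of-sublevel-sets argument. For attractivity I would first design a candidate $V(e_i,s,v)$ that is non-negative, radially unbounded, and vanishes exactly on $\A$. Because $\A$ is the segment $\{|e_i|\le f_c/k_i,\, s=0,\, v=0\}$, the natural way to encode the $e_i$-dependence is through the dead-zone $\mathrm{dz}_{f_c/k_i}(e_i)$, which is zero precisely on the equilibrium interval and grows linearly outside it; on top of this I would add a positive definite quadratic form in $(s,v)$, cross-terms coupling $\mathrm{dz}_{f_c/k_i}(e_i)$ with $s$ and $v$, and possibly a summand in $|v|$ designed to match the sign-dependent friction contribution. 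The discontinuity of (the generalized gradient of) $V$ enters exactly through the breakpoints of the dead-zone and through $|v|$ at $v=0$.

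The next step is to compute the set-valued derivative of $V$ along \eqref{eq:xPhDot}, treating the regions $v>0$, $v<0$ and $v=0$ separately. On the two open half-spaces the $|v|$-term differentiates to $\pm 1$, and by tuning the weights with the help of Assumption~\ref{ass:PIDpars} (equivalent by the Routh test to exponential stability of the frictionless linearization) the quadratic part of $\dot V$ should become negative semidefinite in $(s,v)$ while the $f_c$-contributions cancel. On $v=0$ the derivative must be checked \emph{uniformly over every selection} in $-f_c\SGN(0)$, exploiting that $-k_i e_i$ belongs to this set exactly when $e_i\in[-f_c/k_i,f_c/k_i]$, which is where the dead-zone has already shut off. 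Once $\dot V\le 0$ is secured globally, I would invoke a nonsmooth LaSalle invariance principle for regular differential inclusions in the spirit of \cite[\S4.2]{khalilNonlinear}, justified here by the well-posedness established in Lemma~\ref{lem:uniq} and \cite[Prop.~6.10]{goebel2012hybrid}: the largest weakly invariant subset of $\{\dot V=0\}$ should force $v\equiv 0$, hence by successive integrations $s\equiv 0$, and finally the inclusion on $\dot v$ forces $-k_ie_i\in f_c\SGN(0)$, i.e.\ membership in $\A$.

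Lyapunov stability is then handled around a generic $(\bar e_i,0,0)\in\A$: since any element of $f_c\SGN(0)$ with appropriate magnitude can balance the drift $-k_i e_i$ as long as $|e_i|\le f_c/k_i$, solutions cannot escape a chosen neighborhood too quickly, and a sublevel-set argument on $V$ (or on a locally adapted continuous variant near the critical points $e_i=\pm f_c/k_i$, where $\mathrm{dz}_{f_c/k_i}$ is only $C^{0}$) turns $\dot V\le 0$ into neighborhood invariance and hence stability. The main obstacle I anticipate is the construction of $V$ itself: the dead-zone breakpoints coincide exactly with the endpoints of $\A$, where the friction map switches from single-valued to genuinely set-valued, so non-positivity of $\dot V$ over \emph{every} admissible selection, together with positivity away from $\A$ and a distance-like equivalence ensuring that $V\to 0$ implies $(e_i,s,v)\to\A$, requires a careful simultaneous tuning of all the cross-term coefficients against the Routh-type inequality $k_v k_p>k_i$.
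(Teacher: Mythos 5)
There is a genuine gap, and it sits exactly where you yourself flag the ``main obstacle'': the single continuous, dead-zone--based Lyapunov function with globally non-positive set-valued derivative that your whole plan rests on does not exist in the form you propose. Every term in your candidate $V$ (quadratic in $(s,v)$, cross terms, $|v|$, $\tfrac12\mathrm{dz}_{f_c/k_i}(e_i)^2$) differentiates along \eqref{eq:xPhDot} into expressions that are either proportional to $v$ or pair a gradient component with $\dot v$; the only sign-definite dissipation available is in $v$. Consequently, at points with $v$ small (or zero), $s\neq 0$ and $|e_i|>f_c/k_i$, the sign-indefinite term $\mathrm{dz}_{f_c/k_i}(e_i)\,\dot e_i=\mathrm{dz}_{f_c/k_i}(e_i)\,s$ survives with nothing to dominate it: trying to cancel it with an $e_i$--$v$ (or $s$--$v$) cross term re-introduces an undominated $s\,v$ or $s\,\mathrm{dz}$ product elsewhere, because no $-s^2$ term is ever generated. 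This is not a matter of ``careful simultaneous tuning''; it is the structural reason the paper abandons the continuous dead-zone candidate. The paper's $V$ in \eqref{eq:lyapLike} instead \emph{minimizes over the friction set}, so that during a slip phase with $v>0$ it is the quadratic centered at the shifted virtual equilibrium $\phi=f_c$ (not at the dead-zone), which is what makes $\tfrac{d}{dt}V=-c\,v^2$ exact along the affine subsystems. The price is that $V$ is genuinely discontinuous (only lower semicontinuous), so its decrease cannot be read off a generalized gradient: it is proved via the case-by-case reduction to affine subsystems in Claim~\ref{claim:suitableAffine} and the Dini-derivative integral comparison of Fact~\ref{fact:bound}, and attractivity then follows from Ryan's \emph{integral} invariance principle (Fact~\ref{fact:invariance}) applied to $\ell(x)=v^2$ with the bound $\int_0^\infty v^2\le V(x(0))/c$ — not from a classical LaSalle argument on $\{\dot V=0\}$, which would be delicate for a discontinuous $V$. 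Your identification of the largest invariant subset of $\{v=0\}$ as $\A$ is correct and matches the paper.

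Your stability step inherits the same gap in amplified form: the discontinuous $V$ admits no continuous upper bound $V(x)\le c_2|x|_\A^2$ near $\A$ (it jumps to $f_c^2$ along $(0,0,\varepsilon)\to 0$), so a sublevel-set argument on it cannot close, and the continuous variant you allude to fails to decrease in part of the state space as explained above. The paper resolves this with a genuinely two-function argument (Lemma~\ref{lem:lyapStab}): an auxiliary continuous $\hat V$ in \eqref{eq:Vhat} that is equivalent to $|x|_\A^2$ and non-increasing only on a region $\hat R$, sandwiching of $V$ by $|x|_\A^2$ only on the complementary region $R$, and a hand-off — bound the solution by $\hat V$ until it first enters $R$, then switch to the globally non-increasing $V$ — yielding the stability constant $\sqrt{c_2\hat c_2/(c_1\hat c_1)}$. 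This region-splitting idea is absent from your proposal and is not a cosmetic detail; without it (or an explicit converse-Lyapunov-type construction you do not supply) the stability claim is not established.
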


Note that no smaller set could be proven to be globally attractive because $\A$ is a union of equilibria.

Our main result in Theorem~\ref{thm:GAS} establishes relevant robust stability properties of~$\A$ for~\eqref{eq:xPhDot} in terms of the behavior of solutions to the perturbed dynamics\footnote{$\tilde F$, $\ball$ and $\overline{\text{co}}$ denote respectively the (set-valued) right-hand side of \eqref{eq:Fdynamics}, the closed unit ball and the closed convex hull of a set.}
\begin{equation}
\label{eq:xPhDotPert}
\dot z \in \overline{\text{co}} \tilde F(z+\rho(z)\ball)+ \rho(z)\ball,
\end{equation}
where $\rho:\real^3 \rightarrow \real_{\ge 0}$ is a suitable non-negative perturbation function satisfying $z\notin \A \Rightarrow  \rho(z)>0$. In particular, by using the equivalences in \cite[Chap.~7]{goebel2012hybrid} (which apply because system~\eqref{eq:xPhDot} is well-posed from the regularity of $\tilde F$ \cite[Thm~6.30]{goebel2012hybrid} and $\A$ is compact), Proposition~\ref{prop:GA+S} (see~\cite[Thm.~7.21]{goebel2012hybrid}) implies robust global $\mathcal{KL}$ asymptotic stability of $\A$ for~\eqref{eq:xPhDot}, namely the existence of $\beta_0\in\mathcal{KL}$ such that all solutions to~\eqref{eq:xPhDotPert} satisfy $|z(t)|_\A\le \beta_0(|z(0)|_\A,t)$ for all $t\ge 0$, in addition to robust global uniform asymptotic stability of~$\A$ for~\eqref{eq:xPhDot}, namely the property that $\A$ is uniformly globally stable and attractive for~\eqref{eq:xPhDotPert} (see \cite[Def.~3.6]{goebel2012hybrid}).
\begin{theorem}
\label{thm:GAS}
Under Assumption~\ref{ass:PIDpars}, the attractor $\A$ in~\eqref{eq:Aph} is robustly globally $\mathcal{KL}$ asymptotically stable for dynamics~\eqref{eq:xPhDot} and robustly uniformly globally asymptotically stable.
\end{theorem}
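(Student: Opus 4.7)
The plan is to reduce Theorem~\ref{thm:GAS} to an application of the robustness results in \cite[Ch.~7]{goebel2012hybrid}, using Proposition~\ref{prop:GA+S} to supply the underlying nominal (non-robust) asymptotic stability of $\mathcal{A}$. The approach has three logical ingredients: well-posedness of \eqref{eq:xPhDot} in the sense of hybrid systems, compactness of the attractor $\mathcal{A}$, and nominal uniform global asymptotic stability (UGAS) of $\mathcal{A}$. Essentially all of the analytical work---the discontinuous Lyapunov-like construction and the LaSalle-style invariance argument---has been concentrated in Proposition~\ref{prop:GA+S}, so the present statement is principally a packaging step.

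First, I would verify that the set-valued right-hand side $\tilde F$ in \eqref{eq:Fdynamics} satisfies the regularity needed by \cite[Thm.~6.30]{goebel2012hybrid}: outer semicontinuity, local boundedness, and nonempty, compact, convex values. Each of these was already invoked in the proof of Lemma~\ref{lem:uniq} and is a direct consequence of the definition of $\SGN$ in \eqref{eq:frictionforceSetValued} together with the affine structure of the remaining components of $\tilde F$. Regarding \eqref{eq:xPhDot} as a purely continuous-time hybrid system, this yields the well-posedness hypothesis. Compactness of $\mathcal{A}$ is immediate from its explicit description in \eqref{eq:Aph} as the compact segment $\{0\}\times\{0\}\times[-f_c/k_i,\,f_c/k_i]$.

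Next, I would combine Lyapunov stability and global attractivity of $\mathcal{A}$ provided by Proposition~\ref{prop:GA+S} with well-posedness to upgrade global asymptotic stability to UGAS for \eqref{eq:xPhDot} (this upgrade for compact attractors of well-posed systems is exactly the role of the equivalences in \cite[Ch.~7]{goebel2012hybrid}). Then \cite[Thm.~7.21]{goebel2012hybrid} applies to a compact UGAS attractor in a well-posed system and delivers both robustness notions appearing in the statement: the $\mathcal{KL}$-estimate $|z(t)|_\mathcal{A}\le\beta_0(|z(0)|_\mathcal{A},t)$ for all solutions of the perturbed inclusion \eqref{eq:xPhDotPert} with a suitable admissible perturbation radius $\rho$, and uniform global asymptotic stability of $\mathcal{A}$ for \eqref{eq:xPhDotPert} in the sense of \cite[Def.~3.6]{goebel2012hybrid}.

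The only real conceptual obstacle has already been absorbed into Proposition~\ref{prop:GA+S}: constructing a Lyapunov-like function adapted to the set-valued friction term and running a nonsmooth LaSalle argument toward $\mathcal{A}$. Once this is granted, the step from nominal UGAS to robust $\mathcal{KL}$-stability and robust UGAS is purely structural; it requires no new dynamical estimate on \eqref{eq:xPhDot}, but only a careful bookkeeping to check that the invocation of \cite[Thm.~7.21]{goebel2012hybrid} is legitimate, which reduces to the two items already verified above (well-posedness of $\tilde F$ and compactness of $\mathcal{A}$).
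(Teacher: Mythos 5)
Your proposal is correct and follows essentially the same route as the paper: the paper's justification of Theorem~\ref{thm:GAS} is precisely the combination of Proposition~\ref{prop:GA+S} with well-posedness of \eqref{eq:xPhDot} (via the regularity of $\tilde F$ and \cite[Thm~6.30]{goebel2012hybrid}) and compactness of $\A$, feeding into the equivalences of \cite[Chap.~7]{goebel2012hybrid} and in particular \cite[Thm.~7.21]{goebel2012hybrid}. No discrepancies to report.
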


An interesting consequence of the robustness result established in Theorem~\ref{thm:GAS} is the semiglobal practical robust asymptotic stability of attractor $\A$ established in \cite[Thm 7.12 and Lemma 7.20]{goebel2012hybrid}. A specific perturbation of interest arises when selecting a constant scalar $\rho_v \in \real$ and perturbing the friction effect as follows:
\begin{align}
\label{eq:perturbed}
\dot{z}
&\in
\smallmat{s\\ v\\ - k_i e_i - k_p s   -k_v v} 
 - f_c \smallmat{0 \\ 0 \\ 1}{\SGN}_{\rho_v}(v) \\
 \nonumber
{\SGN}_{\rho_v}(v)& :=\begin{cases} 
[\sign(v)-|\rho_v|, \sign(v)+|\rho_v|], &\!\! \text{if }|v|  > |\rho_v| \\
[-1-|\rho_v|,1+|\rho_v|], &\!\! \text{if }|v| \leq |\rho_v| .\end{cases}
\end{align}

The general definition of the inflation of a set-valued mapping for $\SGN$ in~\eqref{eq:frictionforceSetValued} is $\SGN(v+|\rho_v|\ball)+|\rho_v|\ball$. This inflation coincides with $\SGN_{\rho_v}$ in~\eqref{eq:perturbed}, and in the special case $\rho_v=0$, $\SGN_{0}$ clearly coincides with $\SGN$.

\begin{corollary}
\label{cor:stribeck}
Under Assumption~\ref{ass:PIDpars}, the attractor $\A$ in~\eqref{eq:Aph} is globally input-to-state stable for dynamics~\eqref{eq:perturbed} from input $\rho_v$.
\end{corollary}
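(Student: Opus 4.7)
The plan is to reduce the perturbed dynamics~\eqref{eq:perturbed} to the inflated inclusion~\eqref{eq:xPhDotPert} analyzed in Theorem~\ref{thm:GAS}, and then to invoke the ISS characterization of robust $\mathcal{KL}$-asymptotic stability of a compact attractor for a well-posed set-valued system.

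First, as already noted right after~\eqref{eq:perturbed}, $\SGN_{\rho_v}$ coincides with the $|\rho_v|$-inflation $\SGN(v+|\rho_v|\ball)+|\rho_v|\ball$ of $\SGN$. A direct bookkeeping argument, perturbing only the third coordinate of $z$ in the argument of $\tilde F$, shows that any element of the right-hand side of~\eqref{eq:perturbed} can be written as an element of $\tilde F(z')$ for some $z'\in z+|\rho_v|\ball$ plus an additive term of norm at most $C|\rho_v|$, where $C>0$ depends only on $k_v$ and $f_c$. Letting $\bar c:=\max(1,C)$, every solution to~\eqref{eq:perturbed} is therefore a solution to~\eqref{eq:xPhDotPert} with the constant choice $\rho(z)\equiv\bar c|\rho_v|$.

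Second, dynamics~\eqref{eq:xPhDot} is well-posed (outer semicontinuous, locally bounded, with nonempty compact convex values, as used in Lemma~\ref{lem:uniq}) and $\A$ is compact. Consequently, Theorem~\ref{thm:GAS}, together with the standard ISS characterization for well-posed differential inclusions in~\cite[Chap.~7]{goebel2012hybrid} (the same machinery that yields the semiglobal practical robust asymptotic stability consequence stated just before the corollary), provides $\beta\in\mathcal{KL}$ and $\gamma\in\mathcal{K}$ such that every solution $z$ to~\eqref{eq:perturbed} satisfies
\begin{equation*}
|z(t)|_\A \le \beta(|z(0)|_\A,t) + \gamma(|\rho_v|) \qquad \forall\, t\ge 0,
\end{equation*}
which is the desired ISS estimate from the input $\rho_v$. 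The fact that the state-argument inflation and the additive ball in~\eqref{eq:xPhDotPert} are both governed by the single scalar $\bar c|\rho_v|$ is precisely what allows a single gain depending only on $|\rho_v|$.

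The main obstacle is the bookkeeping in the first step—verifying that the discrepancy between the right-hand side of~\eqref{eq:perturbed} and the set $\tilde F(z+|\rho_v|\ball)$ can be absorbed in an additive ball of radius linear in $|\rho_v|$ (in particular, controlling the contribution of the $-k_v v$ term under the inflation of $v$, so that a single constant $\bar c$ suffices). Once this reduction to the framework of~\eqref{eq:xPhDotPert} is in place, the ISS conclusion follows directly from the robustness results for well-posed set-valued systems applied to the compact attractor $\A$.
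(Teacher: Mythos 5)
There is a genuine gap in the second step. Your reduction of~\eqref{eq:perturbed} to the inflated inclusion~\eqref{eq:xPhDotPert} with the constant choice $\rho(z)\equiv\bar c|\rho_v|$ is fine as a statement about inclusions, but the robustness machinery you then invoke does not deliver a \emph{global} ISS bound from it. The perturbation function $\rho$ in~\eqref{eq:xPhDotPert} for which Theorem~\ref{thm:GAS} guarantees preservation of $\mathcal{KL}$ asymptotic stability is a state-dependent margin that vanishes on $\A$; a constant inflation $\bar c|\rho_v|$ is not dominated by such a $\rho$ near the attractor, so what \cite[Chap.~7]{goebel2012hybrid} actually yields is \emph{semiglobal practical} robust asymptotic stability --- the bound~\eqref{eq:localISS} of the paper, valid only for $|z(0)|_\A\le 1/\delta_\ell$ and $|\rho_v|\le\delta_\ell$. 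Your claim that this machinery ``provides $\beta\in\mathcal{KL}$ and $\gamma\in\mathcal{K}$ such that every solution satisfies'' the global estimate for \emph{all} initial conditions and \emph{all} $\rho_v$ is precisely the assertion that needs proof; it does not follow from well-posedness plus compactness of $\A$ alone.

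The paper closes this gap with two ingredients you omit. First, it exploits the specific structure of~\eqref{eq:perturbed}: the solutions are solutions of the Hurwitz linear system $\dot z = A_\delta z - f_c\smallmat{0\\0\\1}m$ with $|m(t)|\le 1+|\rho_v|$, so BIBO stability gives the \emph{global} bound~\eqref{eq:KL+N}, $|z(t)|_\A\le \kappa_1 e^{-\lambda t}|z(0)|_\A+\kappa_2+\kappa_3|\rho_v|$. This bound holds for arbitrarily large states and inputs but carries a constant offset $\kappa_2$ that cannot be absorbed into a class-$\mathcal{K}$ gain near $\rho_v=0$. Second, the paper glues this global-but-offset estimate to the local ISS estimate~\eqref{eq:localISS} via the waiting-time function $T^\star$ and the three-case analysis in~\eqref{eq:betaGamma}: for large $|\rho_v|$ the offset is absorbed into $\gamma$, and for large initial conditions with small $|\rho_v|$ one waits until time $T^\star(|z(0)|_\A)$ for the solution to enter the basin where~\eqref{eq:localISS} applies, then uses the semigroup property. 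Without both the structure-specific global bound and this gluing argument, the corollary is not established.
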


\smallskip

\begin{proof}
The solutions to~\eqref{eq:perturbed}  are a subset of the solutions to
$\dot{z} = A_\delta z - f_c \smallmat{0\\0\\1} m$,
where $A_\delta$ was trivially defined in the proof of Lemma~\ref{lem:uniq} and is Hurwitz from Assumption~\ref{ass:PIDpars},
and $m$ is a locally integrable signal that for the constant scalar $\rho_v$ and for all $t$ satisfies $m(t) \le 1 + |\rho_v|$ because, for all $t$, ${\SGN}_{\rho_v}(v(t)) \le  1 + |\rho_v|$.
From BIBO stability of exponentially stable linear systems, 
there exist positive $c$ and $\lambda$ such that all solutions satisfy
\begin{equation}
\label{eq:norm2bound}
|z(t)| 
\le c e^{-\lambda t} |z(0)| + c (1+|\rho_v|).
\end{equation}
From the two distances
$|z|_\A^2  :=s^2 + v^2 + \big(\mathrm{dz}_{{f_c}/{k_i}}(e_i)\big)^2$,
$|z|^2  :=s^2 + v^2 + e_i^2$,
we have $|z|_\A \le |z|$ and $|z|^2 \le 2 |z|_\A^2 + 2 \big( \tfrac{f_c}{k_i} \big)^2$ (by splitting into the cases $|e_i|\ge \tfrac{f_c}{k_i}$ and $|e_i| < \tfrac{f_c}{k_i}$), which implies $|z| \le \sqrt{2} \big( |z|_\A + \tfrac{f_c}{k_i} \big)$. These relationships between the two distances and \eqref{eq:norm2bound} imply that there exist positive constants $\kappa_1$, $\kappa_2$, $\kappa_3$ such that all solutions satisfy
\begin{equation}
\label{eq:KL+N}
\begin{split}
|z(t)|_\A & \le |z(t)| \le c e^{-\lambda t} |z(0)| + c (1+|\rho_v|)\\
 & \le \kappa_1 e^{-\lambda t} |z(0)|_\A + \kappa_2  +\kappa_3 |\rho_v|,\,\forall t\ge 0.
\end{split}
\end{equation}
Using Theorem~\ref{thm:GAS} and
the semiglobal practical robustness of $\mathcal{KL}$ asymptotic stability established in \cite[Lemma 7.20]{goebel2012hybrid}, one can transform the $\delta$-$\epsilon$ argument of \cite[Lemma 7.20]{goebel2012hybrid} into a class $\mathcal{K}$ function $\gamma_\ell$ by following similar steps to \cite[Lemma 4.5]{khalilNonlinear}. Moreover, using a similar approach to \cite[Thm. 2]{sontag1990further} relating the size of the initial condition and of the input, we obtain the following:
\begin{multline}
\label{eq:localISS}
|z(0)|_{\mathcal A} \leq \tfrac{1}{\delta_\ell}, |\rho_v| \leq \delta_\ell \Rightarrow\\
|z(t)|_{\mathcal A} \leq \beta_\ell (|z(0)|_{\mathcal A} ,t )+ 
\gamma_\ell(|\rho_v|), \; \forall t\geq 0,
\end{multline}
for some suitable class $\mathcal{KL}$ and class $\mathcal{K}$ functions $\beta_\ell$ and $\gamma_\ell$, and for a small enough scalar $\delta_\ell>0$.
Without loss of generality, consider now using in \eqref{eq:localISS} a small enough $\delta_\ell$ such that $(2\delta_\ell)^{-1} \geq \kappa_2 + \kappa_3 \delta_\ell$. 
Introduce function $T^\star \colon \real_{\geq 0} \to \real_{\geq 0}$ with $T^\star (s):= \max \{0, \lambda^{-1} \log (2 \delta_\ell \kappa_1 s)\}$, which satisfies:
\begin{equation}
\label{eq:Tgood}
\kappa_1\, e^{-\lambda T^\star (s)}\,\, s + \kappa_2  +\kappa_3 \delta_\ell \leq 
\delta_\ell^{-1}, \quad \forall s\geq 0.
\end{equation}

\begin{figure}[!t]
\centering
\includegraphics[width=2.5in]{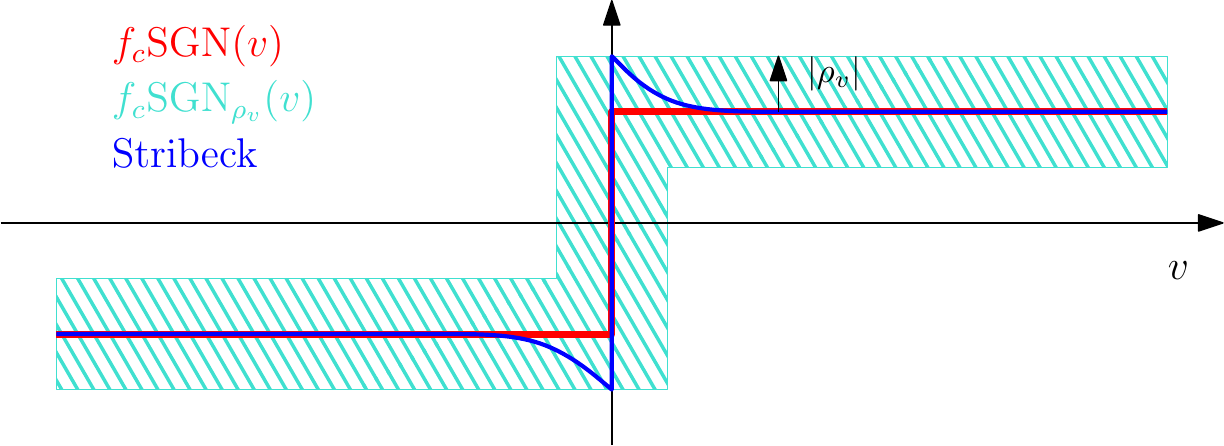}
\caption{Stribeck effect is included in the perturbation~\eqref{eq:perturbed}.}
\label{fig:stribeck}
\end{figure}

Finally, we conclude the proof by establishing the following (global) ISS bound from $\rho_v$:
\begin{equation}
\label{eq:ISS}
 |z(t)|_\A \le \beta(|z(0)|_\A,t) + \gamma(|\rho_v|), \forall z(0), \forall \rho_v, \forall t\ge 0,
\end{equation}
where functions $\beta$ and $\gamma$ of class $\mathcal{KL}$ and class $\K$, respectively, are built starting from the following inequalities:%
\begin{subequations}
\label{eq:betaGamma}
\begin{empheq}[left={\!\!\!\!\beta(s,t) \ge\! \empheqlbrace}]{align}
\kappa_1 e^{-\lambda t} s + \kappa_2  +\kappa_3 \delta_\ell, & \text{ if }  s\geq \tfrac{1}{\delta_\ell},\, t\leq T^\star(s)  \label{eq:betaTop} \\
\mathfrak{b}(s,t), \hspace{2.2cm} & \text{ otherwise} \label{eq:betaBottom}
\end{empheq}\vspace{-0.25cm}
\begin{equation}
\label{eq:betaBottomAlias}
\!\,\mathfrak{b}(s,t):=\max\left\{ \beta_\ell \left(s,\max\{0,t-T^\star(s)\}\right),  \kappa_1 e^{-\lambda t} s \right\} 
\end{equation}\vspace{-0.5cm}
\begin{empheq}[left={\hspace*{-4.75cm} \gamma(s) \ge \!\empheqlbrace}]{align}
\kappa_2 + \kappa_3 s, & \text{ if } s \ge  \delta_\ell \label{eq:gammaTop} \\
\gamma_\ell (s), \hspace{0.5cm} & \text{ if }s \le \delta_\ell. \label{eq:gammaBottom}
\end{empheq}
\end{subequations}
The effectiveness of selections \eqref{eq:betaGamma} for establishing the ISS bound \eqref{eq:ISS} can be verified case by case.\\
{\bf Case 1} ($|\rho_v| \geq \delta_\ell$): use \eqref{eq:KL+N}, \eqref{eq:gammaTop}, and bound $\kappa_1 e^{-\lambda t} s$ in \eqref{eq:betaBottomAlias}-\eqref{eq:betaBottom}.\\
{\bf Case 2} ($|\rho_v| \leq \delta_\ell$ and
 $|z(0)|_\A \leq \delta_\ell^{-1}$):  use \eqref{eq:localISS}, \eqref{eq:gammaBottom}, and  bound $\beta_\ell \left(s,\max\{0,t-T^\star(s)\}\right)$ in \eqref{eq:betaBottomAlias}-\eqref{eq:betaBottom}.\\
{\bf Case 3} ($|\rho_v| \leq \delta_\ell$ and
 $|z(0)|_\A \geq \delta_\ell^{-1}$): for $t\leq T^\star(|z(0)|_\A)$  use \eqref{eq:betaTop} and nonnegativity of $\gamma$, whereas for $t\geq T^\star(|z(0)|_\A)$  use 
 $|z(T^\star(|z(0)|_\A))|_\A \leq {\delta_\ell}^{-1}$ (from \eqref{eq:KL+N} and \eqref{eq:Tgood}) and the semigroup property of solutions to fall again into case 2 above.
\end{proof}

A consequence of Corollary~\ref{cor:stribeck} is that the Stribeck effect, which is known to lead to persistent oscillations (the so-called \emph{hunting} phenomenon), produces solutions that are graceful degradations (in the ISS sense) of the asymptotically stable solutions to the unperturbed dynamics because small Stribeck deformations lead to graphs included in the graph of $f_c \SGN_{\rho_v}(v)$, as shown in Figure~\ref{fig:stribeck}.

\begin{table}[tb]
\caption{Parameters and eigenvalues for simulations in Section~\ref{sec:sims}.} 
\label{tab:PIDpar} 
\centering 
\begin{tabular}{m{0.25cm}m{0.35cm}m{0.35cm}m{0.35cm}m{2.7cm}}
\toprule
Case & $k_v$ & $k_p$ & $k_i$ & Roots\\
\midrule
\textit{(a)} & 6.4 & 3       & 4              & $-6.01$, $-0.19\pm i 0.79$\\
\textit{(b)} & 1.5 & 0.66  & 0.08     & $-0.8$, $-0.5$, $-0.2$\\
\bottomrule
\end{tabular}
\end{table}

\section{Illustration by simulation}
\label{sec:sims}

\begin{figure*}[tb]
\centering
{\includegraphics[width=0.48\textwidth]{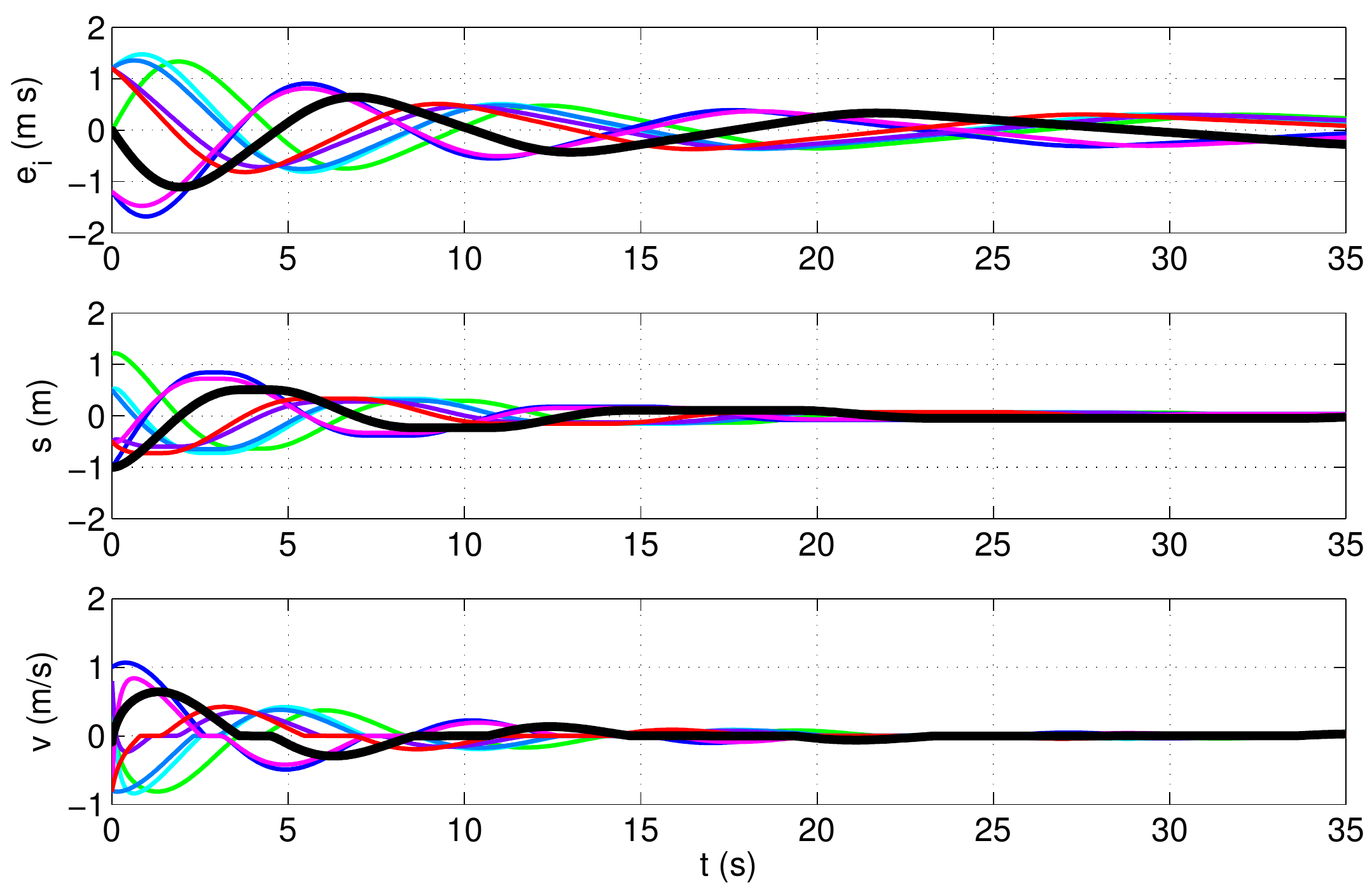}} \,\,
{\includegraphics[width=0.48\textwidth]{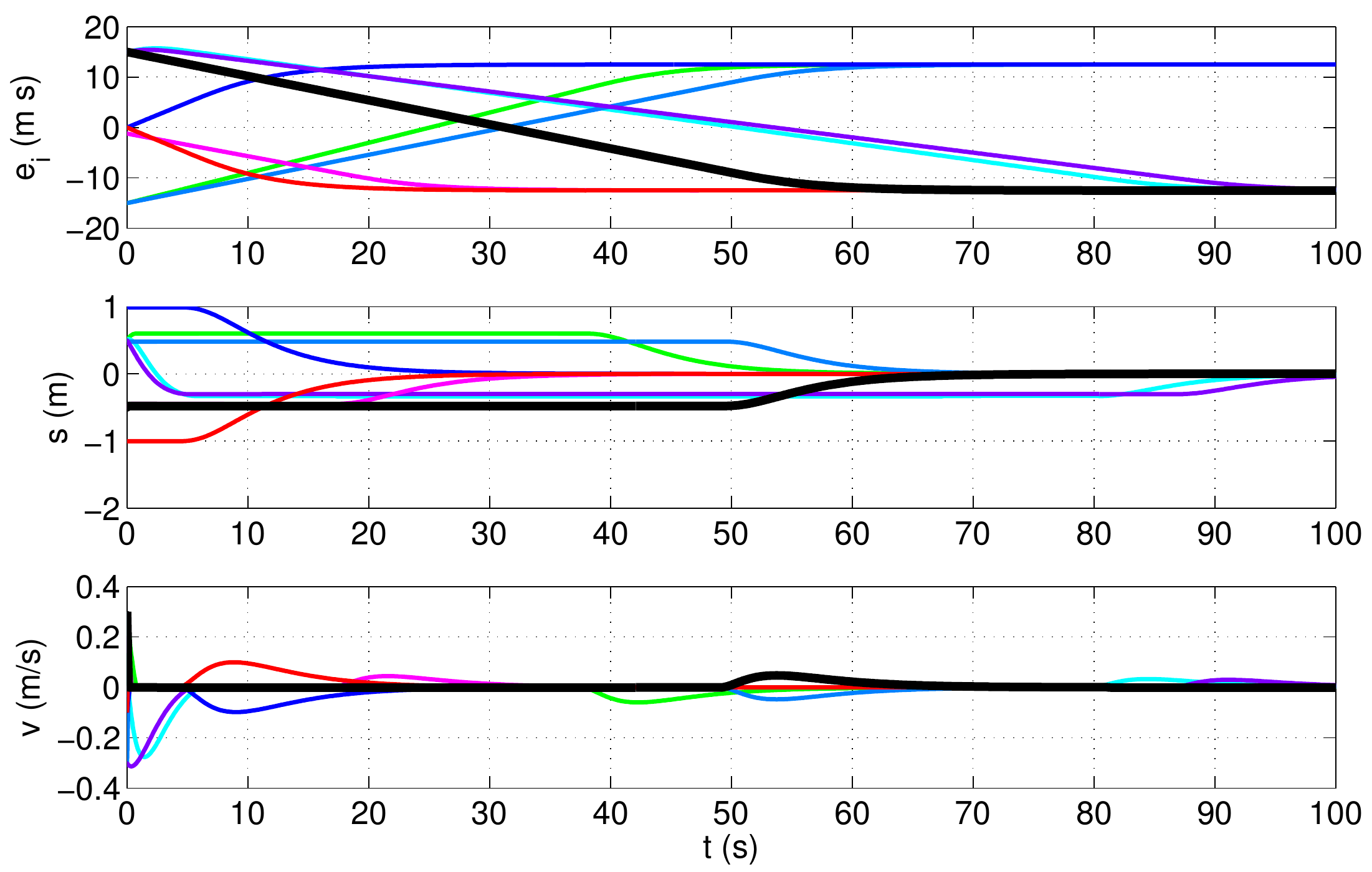}} \\
{\includegraphics[width=0.48\textwidth]{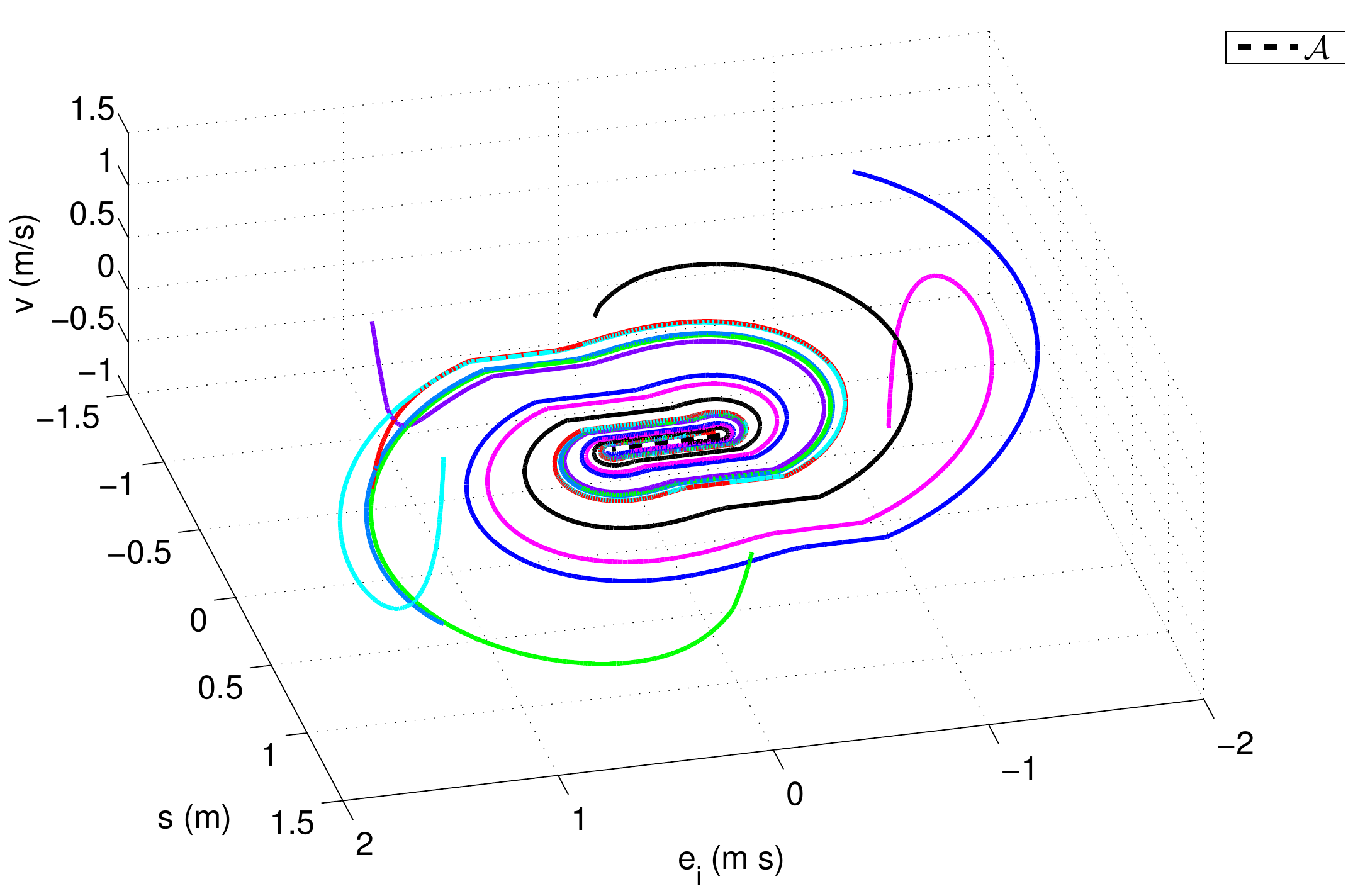}}\,\,
{\includegraphics[width=0.48\textwidth]{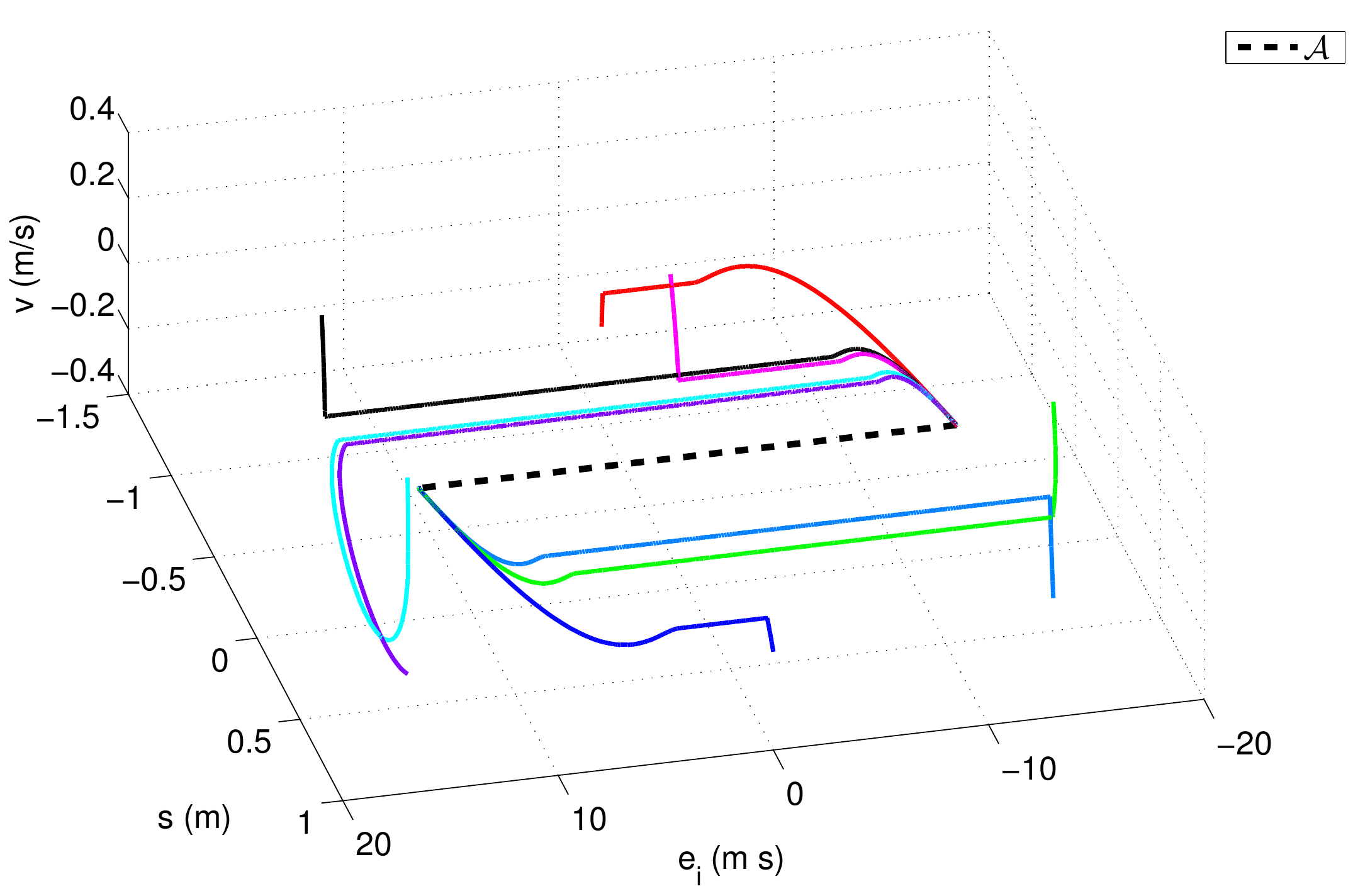}}\\
{\includegraphics[width=0.48\textwidth]{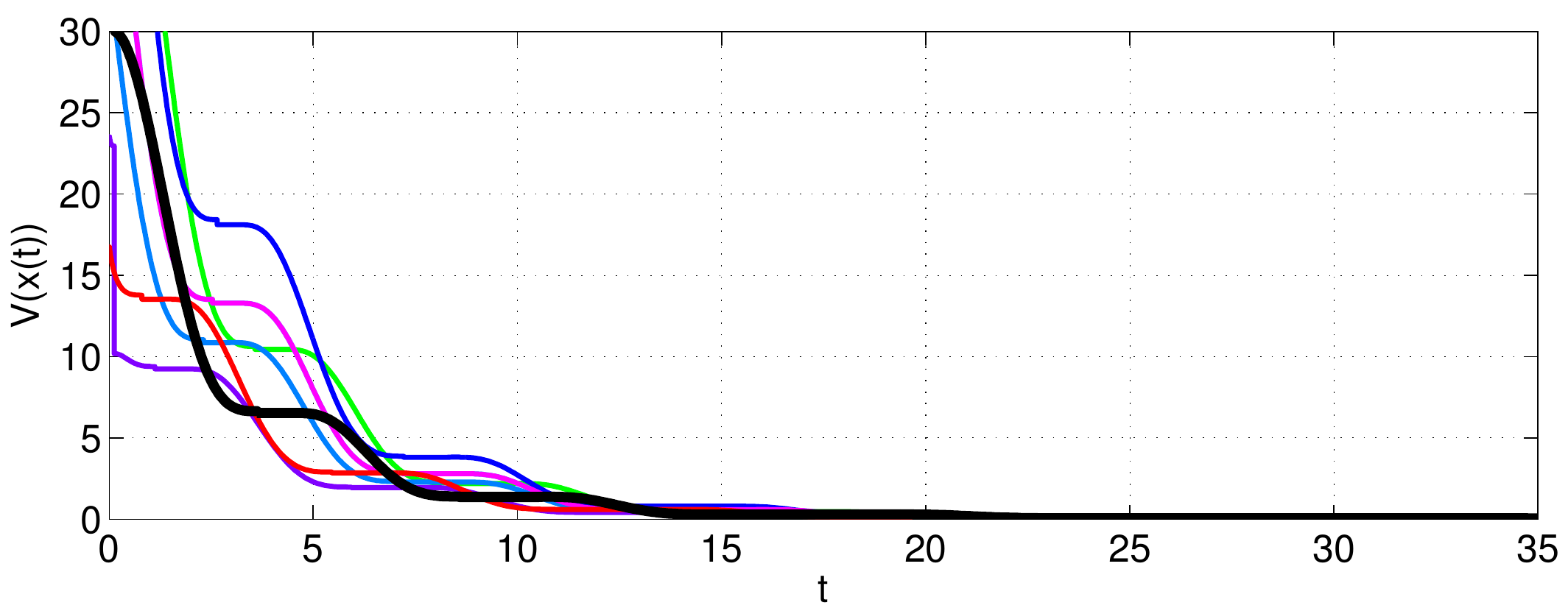}} \,\,
{\includegraphics[width=0.48\textwidth]{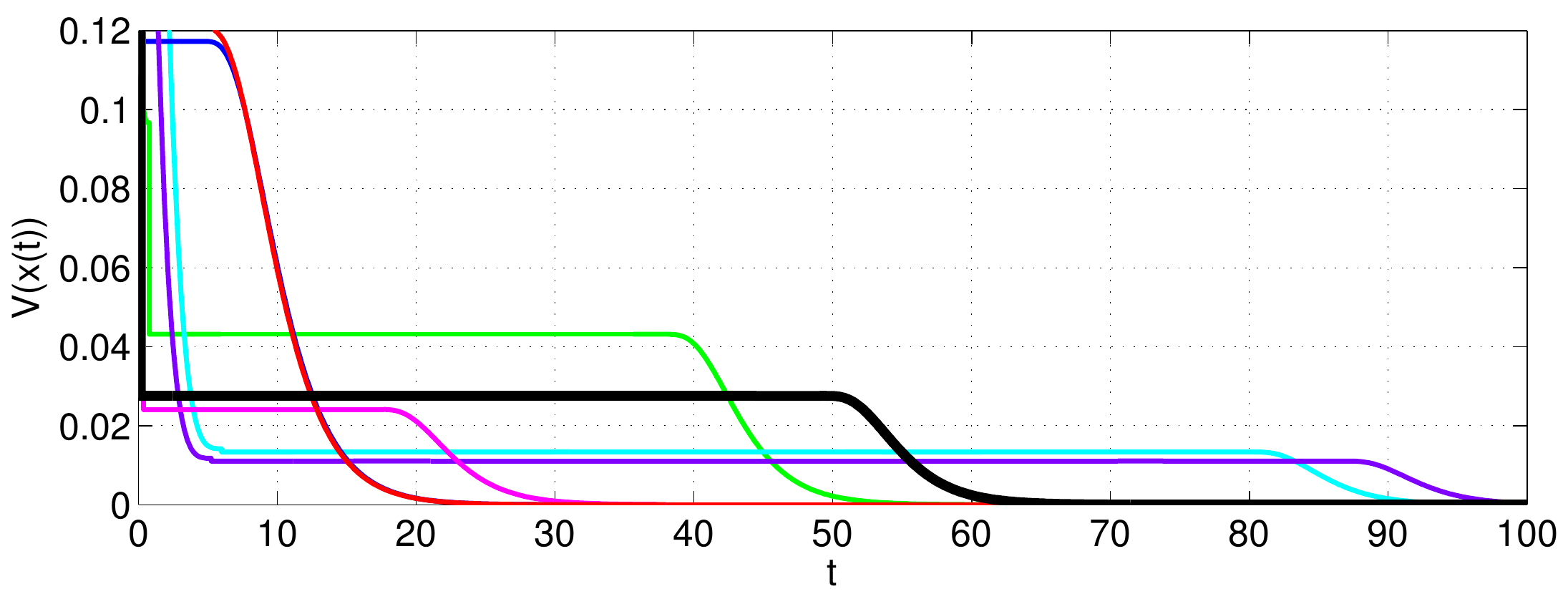}}
\caption{Top: solutions to~\eqref{eq:xPhDot} for different initial conditions. Center: phase portraits for~\eqref{eq:xPhDot} for the same solutions. Bottom: Lyapunov-like function $V$ in~\eqref{eq:lyapLike} evaluated along the same solutions. All the figures to the left (resp., right) refer to the PID parameters \textit{(a)} (resp., \textit{(b)}) in Table~\ref{tab:PIDpar}.}
\label{fig:sims}
\end{figure*}

Before we prove our main result, we obtain solutions by simulation in order to illustrate the typical behaviour of~\eqref{eq:xPhDot} and the convergence to the attractor. Simulations capture for each initial condition the unique solution to~\eqref{eq:xPhDot} because of Lemma~\ref{lem:uniq}.

When $f_c=0$, \eqref{eq:xPhDot} reduces to a linear system with characteristic polynomial $s^3+k_v s^2+k_p s + k_i =0$. From Assumption~\ref{ass:PIDpars}, its roots have negative real part, but we do not assume anything about their specific locations in the complex plane. In~\cite[Lemmas~L1 and L2]{armstrong1996pid}, each of these possible locations needed examining as part of a proof relying on the algebraic expression of solutions. We will not need this in our subsequent proof, but we base our simulations on two representative cases, complex conjugate and three distinct real roots. The corresponding parameters $k_v$, $k_p$, $k_i$ and roots are listed in Table~\ref{tab:PIDpar} and the value $f_c=1$~m/s$^2$ is common to all simulations.

First, we present the solutions to~\eqref{eq:xPhDot} for different sets of initial conditions $(\bar e_i, \bar s,\bar v)$ for the cases \textit{(a)} and \textit{(b)} in Table~\ref{tab:PIDpar}, respectively in the left and right top plots of Figure~\ref{fig:sims}. We use a heavier black line to point out a specific solution and the two different phases that are visible on it. In particular, there are time intervals when the mass is in motion (called slip phases in the friction literature) and others when the mass is at rest (called stick phases), which can be detected easily from the velocity being zero on a nonzero time interval. Whenever the mass is in a slip phase, the PID control acts in the direction of getting the mass closer to the position setpoint at zero. During a stick phase starting at $t_i$, the mass is at rest ($v=0$) and only the error integral builds up linearly in time as $e_i(t)=e_i(t_i)+s(t_i) (t-t_i)$ until the control action $u$ is such to overcome the Coulomb friction, that is, $|u|=|-k_i e_i -k_p s|=f_c$. So, the closer the mass is to zero position (smaller $s(t_i)$), the longer it takes the error to build up. This explains the ramps in the top of Figure~\ref{fig:sims}, and their decreasing slope and increasing duration. Moreover, we notice that position and velocity converge to zero, but the error integral does not in general. For the case of complex conjugate roots (top, left), the error integral continues to oscillate and these oscillations enter asymptotically the set $[-f_c/k_i,f_c/k_i]$ as the position approaches zero. For the case of distinct real roots (top, right), after a stick phase the position and the velocity converge to zero \emph{only} asymptotically, so that $v$ remains always nonzero and in~\eqref{eq:xPhDot} $e_i$ can only approach the equilibria $f_c/k_i$ or $-f_c/k_i$. As a consequence of the reasoning above (the smaller the position, the longer it takes to exit a stick phase), solutions converge asymptotically, but not exponentially.

Second, we present in the left and right center plots of Figure~\ref{fig:sims} a phase portrait for the same solutions in the top plots. In these figures it is evident that solutions converge to the attractor in~\eqref{eq:Aph}, as we postulate in item 1) of Proposition~\ref{prop:GA+S}. In particular, the two different manners in which solutions converge to the attractor in the cases of complex conjugate or distinct real roots are evident here as well. In the left center plot, we can also appreciate the presence of a strip in the plane $v=0$ expressed by equation $-f_c \le -k_i e_i - k_p s \le f_c$ (as in \cite[Page~7]{radcliffe1990property}), that is, the region of the state space where stick is bound to occur.

Third, we keep the same initial conditions and parameters and we anticipate the evolution along solutions of the Lyapunov-like function introduced in the next section (see~\eqref{eq:lyapLike}). In particular, this function is nonincreasing along solutions, it can be discontinuous (e.g., the left, bottom, violet curve at $t=0.123$~s), and remains constant during stick (as pointed out by the same heavier black curves).

\section{Proof of Prop.~\ref{prop:GA+S}: global attractivity}
\label{sec:attr}

\subsection{Coordinate change and discontinuous LaSalle function}
\label{sec:Lyap}

For the following analysis we adopt a specific change of coordinates for~\eqref{eq:xPhDot}, that is,
\begin{equation}
\label{eq:CoC}
\begin{aligned}
\sigma & := - k_i s\\
\phi & := - k_i e_i -k_p s, \\
v & := v.
\end{aligned}
\end{equation}
The change of coordinates is nonsingular thanks to Assumption~\ref{ass:PIDpars} ($k_i$, $k_p$ strictly positive) and it rewrites (\ref{eq:xPhDot}) as
\begin{equation}
\label{eq:xDot}
\begin{aligned}
\dot{x}:=
\begin{bmatrix}
\dot{\sigma}\\
\dot{\phi}\\
\dot{v}
\end{bmatrix}
& \in
\begin{bmatrix}
- k_i v\\
\sigma - k_p v\\
\phi - k_v v -f_c \SGN (v)
\end{bmatrix}
\\
& =
\underbrace{
\begin{bmatrix}
0 & 0 & -k_i\\
1 & 0 & -k_p\\
0 & 1 & -k_v
\end{bmatrix}
}_{=:A}
\begin{bmatrix}
\sigma\\
\phi\\
v
\end{bmatrix}
-
\underbrace{
\begin{bmatrix}
0\\
0\\
f_c 
\end{bmatrix}}_{=:b}\SGN(v)\\
& =A x - b \SGN(v) =: F(x).
\end{aligned}
\end{equation}

In the new coordinates $x$, the attractor ${\mathcal A}$ in \eqref{eq:A} can be expressed as
\begin{equation}
\label{eq:A}
\A= \{ 	(\sigma,\phi,v) \colon |\phi| \le f_c, \sigma = 0, v=0 \}.
\end{equation}
Among other things, the simple expression in~\eqref{eq:A} allows writing
explicitly the distance of a point $x$ from $\A$ as
\begin{equation}
\label{eq:distFromA}
|x|_\A ^2 :=\big(\inf_{y\in\A} |x-y| \big)^2=\sigma^2+v^2+\dz(\phi)^2.
\end{equation}
where $\dz(\phi):=\phi-\text{sat}_{f_c}(\phi)$ is the symmetric scalar deadzone function returning zero when $\phi\in[-f_c,f_c]$.
Indeed, \eqref{eq:distFromA} follows from separating the cases $\phi<-f_c$, $|\phi|\le f_c$, $\phi>f_c$ and applying the definition given in~\eqref{eq:distFromA}.

Based on the above observation, it is rather intuitive to introduce the following discontinuous Lyapunov-like function
\begin{subequations}
\label{eq:lyapLike}
\begin{equation}
\label{eq:lyapLikeV}
\begin{aligned}
V(x) := & \begin{bmatrix}
\sigma\\
v
\end{bmatrix}^T
\begin{bmatrix}
\frac{k_v }{k_i} & -1\\
-1 & k_p 
\end{bmatrix}
\begin{bmatrix}
\sigma\\
v
\end{bmatrix}+\!\!
\min\limits_{f\in f_c \SGN(v)}  |\phi-f|^2\\ 
= &
\min\limits_{f\in f_c \SGN(v)} \smallmat{\sigma \\ \phi-f \\ v}^T P \smallmat{\sigma \\ \phi-f \\ v}
\end{aligned}
\end{equation}
where the matrix $P$ is given by
\begin{equation}
\label{eq:lyapLikeP}
P:=
\begin{bmatrix}
\frac{k_v }{k_i} & 0 & -1\\
0 & 1 &0\\
-1 & 0 & k_p 
\end{bmatrix}.
\end{equation}
\end{subequations}
Note that for $v\neq 0$ the minimization in~\eqref{eq:lyapLikeV} becomes trivial because $f$ can take only the value $f_c \sign(v)$.
It is emphasized that function $V$ is discontinuous. For example, if we evaluate $V$ along the sequence of points $(\sigma_i,\phi_i,v_i)=(0,0,\varepsilon_i)$ for $\varepsilon_i \in (0,1)$ converging to zero, $V$ converges to $f_c^2$, even though its value at zero is zero. Nevertheless, function $V$ enjoys a number of useful properties established in the next lemma whose proof is given in Section~\ref{sec:proofLemLyapLike} below.

\begin{lemma}
\label{lem:lyapLike}
The Lyapunov-like function in~\eqref{eq:lyapLike} is lower semicontinuous (lsc) and enjoys the following properties:
\begin{enumerate}
\item \label{lem:lyapLike:item:lowerbound} $V(x)=0$ for all $x\in\A$ and there exists $c_1>0$ such that $c_1 |x|_\A^2 \le V(x)$ for all $x\in\real^3$,
\item \label{lem:lyapLike:item:decrease} there exists $c>0$ such that each solution $x=(\sigma,\phi,v)$ to~\eqref{eq:xDot} satisfies for all $t_2\ge t_1\ge 0$
\begin{equation}
\label{eq:Vdecrease}
V(x(t_2))-V(x(t_1)) \le - c \int_{t_1}^{t_2} v(t)^2 dt.
\end{equation}
\end{enumerate}
\end{lemma}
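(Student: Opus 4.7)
The plan is to dispatch lower semicontinuity and the quadratic lower bound quickly, then concentrate on the integral decrease, which is the substantive claim.

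For lower semicontinuity, note that $V$ is continuous wherever $v \neq 0$, because there $\SGN(v)$ is single-valued and the expression for $V$ reduces to a polynomial. At a point $x_0$ with $v_0 = 0$, I would introduce the continuous surrogate $\underline V(x) := q(\sigma,v) + \dz(\phi)^2$, where $q(\sigma,v) := \smallmat{\sigma\\v}^T M \smallmat{\sigma\\v}$ with $M := \smallmat{k_v/k_i & -1\\-1 & k_p}$. Since $f_c\sign(v) \in [-f_c,f_c]$ for every $v$ and $\dz(\phi)$ is the distance from $\phi$ to $[-f_c,f_c]$, one has $V(x) \ge \underline V(x)$ pointwise, with equality when $v = 0$. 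Hence $\liminf_{x\to x_0} V(x) \ge \liminf_{x\to x_0} \underline V(x) = \underline V(x_0) = V(x_0)$.

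For item~\ref{lem:lyapLike:item:lowerbound}, the vanishing of $V$ on $\A$ is immediate: $\sigma = v = 0$ annihilates $q$, and $|\phi| \le f_c$ allows $f = \phi$ as a minimizer in \eqref{eq:lyapLikeV}. For the lower bound, Assumption~\ref{ass:PIDpars} gives $k_v/k_i > 0$ and $\det M = k_v k_p/k_i - 1 > 0$, so $M \succ 0$. Combining $V \ge \underline V$ with $\underline V \ge \lambda_{\min}(M)(\sigma^2 + v^2) + \dz(\phi)^2$ and the expression \eqref{eq:distFromA} of $|x|_\A^2$ yields the claim with $c_1 := \min\{\lambda_{\min}(M), 1\}$.

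For item~\ref{lem:lyapLike:item:decrease}, I would partition the time domain of the solution into open slip phases (maximal intervals where $v$ has constant nonzero sign) and stick phases (intervals on which $v \equiv 0$). On a slip phase, $\sign(v)$ is locally constant, so $V$ is smooth, and substituting $\dot\sigma = -k_i v$, $\dot\phi = \sigma - k_p v$, $\dot v = \phi - k_v v - f_c\sign(v)$ into the derivative of \eqref{eq:lyapLikeV} produces a long cancellation yielding
\begin{equation*}
\dot V \;=\; -2(k_p k_v - k_i)\, v^2 \;=:\; -c\, v^2,
\end{equation*}
with $c>0$ by Assumption~\ref{ass:PIDpars}. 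On a stick phase, $\dot\sigma = -k_i v = 0$, and matching $\dot v = 0$ in the inclusion forces $\phi \in [-f_c, f_c]$, so $\dz(\phi) = 0$ and $V = (k_v/k_i)\sigma^2$ is constant, consistently with $\int v^2\,dt = 0$.

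The main technical obstacle is certifying that the discontinuous $V$ has no positive jumps across transitions between phases. By continuity of $x(\cdot)$, the one-sided limits at a transition time $t^*$ read $V(t^{*\pm}) = q(\sigma(t^*),0) + (\phi(t^*) - f_c\sign(v(t^{*\pm})))^2$, while the pointwise value is $V(t^*) = q(\sigma(t^*),0) + \dz(\phi(t^*))^2$. Since $f_c\sign(v) \in [-f_c,f_c]$ and $\dz(\phi(t^*))$ is the distance from $\phi(t^*)$ to $[-f_c,f_c]$, any such jump is nonpositive; moreover, at a stick-to-slip transition the breakaway condition $|\phi(t^*)| = f_c$ forces both $(\phi(t^*)-f_c\sign(v(t^{*+})))^2$ and $\dz(\phi(t^*))^2$ to vanish, so the jump is in fact zero. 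Summing the continuous decrease on each slip phase, the zero change on each stick phase, and the nonpositive contributions at transitions yields \eqref{eq:Vdecrease}.
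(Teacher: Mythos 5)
Your proof of lower semicontinuity and of item~1 is correct. For item~1 it matches the paper's argument. For lower semicontinuity you take a genuinely different and arguably cleaner route: the paper writes $V(x)=d(0,G(x))$ for an outer semicontinuous set-valued composition $G$ and invokes results from variational analysis, whereas your sandwich $V\ge \underline V$ with equality at $v=0$, together with continuity of $\underline V$ and local continuity of $V$ off $\{v=0\}$, gives the same conclusion with elementary tools. Your slip-phase computation $\dot V=-2(k_pk_v-k_i)v^2$ and your stick-phase analysis also agree with what the paper obtains (there packaged as Claim~1 with $c=2(k_vk_p-k_i)$).

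The gap is in how you assemble item~2. Your argument presupposes that $[t_1,t_2]$ decomposes into countably many slip intervals, stick intervals, and isolated transition times, across which you can telescope. But a priori $\{t: v(t)=0\}$ is only a closed set: it may contain isolated zeros of $v$ (touch-and-go points, sign reversals) that are not "stick phases," it could in principle have uncountably many degenerate components, and transitions could accumulate; none of this is ruled out in your proposal, and summing a \emph{discontinuous} function's increments over such a decomposition is exactly where monotonicity arguments usually fail. Relatedly, your blanket claim that ``any such jump is nonpositive'' only establishes $V(x(t^*))\le \lim_{t\to t^{*\pm}}V(x(t))$ (the value sits below both one-sided limits); what the telescoping actually needs is $\lim_{t\downarrow t^*}V(x(t))\le \lim_{t\uparrow t^*}V(x(t))$, and the drop into $t^*$ does not imply control of the rise out of it. You patch this for stick-to-slip transitions via the breakaway condition, but the same exit condition ($\phi(t^*)\sign(v(t^{*+}))\ge f_c$, which forces $\dz(\phi(t^*))=\phi(t^*)-f_c\sign(v(t^{*+}))$ and hence no upward jump) must be verified at \emph{every} exit from $v=0$, including isolated zeros where $|\phi(t^*)|>f_c$ is possible. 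The paper avoids all of this global bookkeeping: Claim~1 identifies, from each state, the single affine flow and smooth quadratic $V_k$ that the solution and $V$ follow on a forward interval, yielding a pointwise bound $\Di (V\circ x)(t)\le -c\,v(t)^2$ on the lower right Dini derivative; Fact~2 (an integration result for lower semicontinuous functions with integrably bounded Dini derivative) then delivers \eqref{eq:Vdecrease} directly, with no need to classify the structure of the zero set of $v$ or to enumerate transitions. To repair your route you would either need to prove the claimed interval structure of the phases and justify the (possibly countably infinite) summation, or switch to a local-to-global device of the Fact~2 type.
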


\begin{remark}
In~\cite{armstrong1996pid} it is proven that if a solution is in a \emph{slip phase} in the nonempty time interval $(t_i,t_{i+1})$ (namely, for all $t\in(t_i,t_{i+1})$, $v(t)\neq 0$) and the slip phase is preceded and followed by a \emph{stick phase} (namely, there exist $\delta>0$ such that, for all $t\in[t_i-\delta,t_i]\cup[t_{i+1},t_{i+1}+\delta]$, $v(t)=0$ and $|\phi(t) |\le f_c$), then 
\begin{equation}
\label{eq:armstrongRes}
|\sigma(t_{i+1})|<|\sigma(t_i)|.
\end{equation}
Instead of using the explicit form of solutions as~\cite[Lemma~L2]{armstrong1996pid} depending on the nature of the eigenvalues of $A$, \eqref{eq:armstrongRes} is easily concluded from~\eqref{eq:Vdecrease}, the definition~\eqref{eq:lyapLikeV}, and $|\phi(t_i)|\le f_c$, $|\phi(t_{i+1})|\le f_c$.
\end{remark}

\subsection{Proof of item 1) of Proposition~\ref{prop:GA+S} (global attractivity)}

We can now prove the first item of Proposition~\ref{prop:GA+S} based on Lemma~\ref{lem:lyapLike} and a generalized version for differential inclusions of the invariance principle \cite[\S4.2]{khalilNonlinear}. The following fact comes indeed from specializing the result in~\cite[Thm.~2.10]{ryan1998integral} to our case, where the differential inclusion \eqref{eq:xPhDot} has actually unique solutions defined for all nonnegative times (as established in Lemma~\ref{lem:uniq}). We also select $G=\real^3$, $U=\real^3$ in~the original result of \cite{ryan1998integral}.

\begin{fact}%
\label{fact:invariance}%
\cite{ryan1998integral} Let $\ell: \real^3 \rightarrow \real_{\ge 0}$ be lower semicontinuous and such that $\ell(x)\ge 0$, for all $x\in\real^3$. If $x$ is a complete and bounded solution to~\eqref{eq:xPhDot} satisfying $\int_0^{+\infty} \ell(x(t)) dt<+\infty$,
then $x$ converges to the largest forward invariant subset $\M$ of $\Sigma := \{x\in \real^3 \colon  \ell(x) = 0\}$.
\end{fact}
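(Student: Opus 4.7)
The plan is to adapt the classical LaSalle invariance argument to the set-valued setting by working with the $\omega$-limit set $\Omega(x):=\{y\in\real^3 : \exists\, t_n\to+\infty \text{ with } x(t_n)\to y\}$ of the given complete, bounded solution. Since $x$ is bounded, standard arguments yield that $\Omega(x)$ is nonempty, compact, and that $\mathrm{dist}(x(t),\Omega(x))\to 0$ as $t\to +\infty$. Consequently, proving $x(t)\to\M$ reduces to showing that $\Omega(x)$ is a forward invariant subset of $\Sigma$, since $\M$ is the largest such subset.

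First I would establish $\Omega(x)\subseteq\Sigma$ by contradiction, combining lower semicontinuity of $\ell$ with the integrability hypothesis. If some $y\in\Omega(x)$ satisfied $\ell(y)=\delta>0$, then lsc would make the sublevel set $\{z:\ell(z)\leq\delta/2\}$ closed, so its complement $U$ would be an open neighborhood of $y$ on which $\ell\geq\delta/2$. Because the right-hand side of~\eqref{eq:xPhDot} is locally bounded and $x$ takes values in a bounded set, $\dot x$ is essentially bounded, hence $x$ is uniformly Lipschitz. Since $x(t)$ visits $U$ at arbitrarily large times, one obtains a sequence of disjoint time intervals of uniformly positive length on each of which $x(t)\in U$; integrating $\ell\circ x$ over these intervals would produce an infinite contribution, contradicting $\int_0^{+\infty}\ell(x(t))\,dt<+\infty$.

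Next I would show forward invariance of $\Omega(x)$ for~\eqref{eq:xPhDot}. For any $y\in\Omega(x)$, pick $t_n\to+\infty$ with $x(t_n)\to y$ and consider the shifted trajectories $x_n(\cdot):=x(\cdot+t_n)$, all contained in the same bounded set as $x$. The right-hand side of~\eqref{eq:xPhDot} is outer semicontinuous, locally bounded, and has nonempty compact convex values, so the standard sequential compactness/graphical-limit theorem for solutions of well-posed differential inclusions yields a subsequence of $\{x_n\}$ converging locally uniformly to some complete solution $\tilde x$ with $\tilde x(0)=y$. For each $t\geq 0$, $\tilde x(t)=\lim x(t+t_n)\in\Omega(x)$, so $\tilde x$ stays in $\Omega(x)$ for all forward time, proving forward invariance. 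Combining the two steps, $\Omega(x)\subseteq\M$, and since $x(t)\to\Omega(x)$, we conclude $x(t)\to\M$.

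The hard part will be the forward-invariance step, where one must invoke a graphical-convergence result tailored to set-valued dynamics rather than smooth ODEs. Uniqueness of solutions established in Lemma~\ref{lem:uniq} simplifies the bookkeeping by ruling out branching in the limit, but the regularity of the right-hand side (the same properties used to secure existence in Lemma~\ref{lem:uniq}) is exactly what certifies that the locally uniform limit of the shifted solutions is itself a solution. By contrast, the first step rests on the essentially textbook observation that lsc of $\ell$ makes the set where $\ell$ exceeds a positive threshold open, after which Lipschitz continuity of $x$ does the rest.
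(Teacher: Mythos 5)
The paper does not actually prove Fact~\ref{fact:invariance}: it is imported wholesale from \cite[Thm.~2.10]{ryan1998integral}, specialized by taking $G=U=\real^3$ and by noting that solutions to~\eqref{eq:xPhDot} are unique and complete (Lemma~\ref{lem:uniq}). Your proposal is therefore not comparable to a proof in the paper but rather a blind reconstruction of Ryan's argument, and as such it is essentially correct and follows the standard route: (i) the $\omega$-limit set of a bounded complete solution is nonempty, compact, and attracts the solution; (ii) the integrability of $\ell\circ x$ together with lower semicontinuity of $\ell$ and the uniform Lipschitz bound on $x$ (from local boundedness of the right-hand side on the compact closure of the range) forces $\Omega(x)\subseteq\Sigma$, via the dwell-time/divergent-series contradiction you describe; (iii) invariance of $\Omega(x)$ follows from sequential compactness of the shifted solutions $x(\cdot+t_n)$, which is exactly where the regularity of the right-hand side (outer semicontinuous, locally bounded, nonempty compact convex values) is consumed. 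Two small points deserve attention. First, your step (iii) as written delivers \emph{weak} forward invariance (through each $y\in\Omega(x)$ there exists a solution remaining in $\Omega(x)$); this is the notion appearing in Ryan's theorem, and it upgrades to the strong notion here only because Lemma~\ref{lem:uniq} gives uniqueness --- you note this, but it is worth making explicit since the Fact is later applied with the strong reading (``each solution starting from $v=0$, $\sigma\neq0$ leaves $\Sigma$''). Second, the compactness/graphical-limit theorem you invoke is a nontrivial external ingredient (it is the content of, e.g., \cite[Thm.~6.30]{goebel2012hybrid} or the Aubin--Cellina convergence theorem), so a fully self-contained proof would still rest on a citation at that point --- which is presumably why the authors chose to cite the result rather than reprove it.
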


\medskip

\begin{proofof}{\em item 1) of Proposition~\ref{prop:GA+S} (global attractivity of $\A$).}
The proof exploits Fact~\ref{fact:invariance}, where we take $\ell(x)=v^2$. From Lemma~\ref{lem:lyapLike}, $V(x(t))\le V(x(0))$ (item~\ref{lem:lyapLike:item:decrease}) and $c_1|x(t)|_\A^2 \le V(x(t))$ (item~\ref{lem:lyapLike:item:lowerbound}), so that $c_1|x(t)|_\A^2 \le V(x(0))$ and consequently all solutions to~\eqref{eq:xDot} are bounded (their completeness is established in Lemma~\ref{lem:uniq}). Apply~\eqref{eq:Vdecrease} from $0$ to $t$, and obtain $c \int_0^t v^2(\tau) d\tau \le V(x(0))-V(x(t)) \le V(x(0))$ because $V(x(t)) \ge 0$ from Lemma~\ref{lem:lyapLike}, item~\ref{lem:lyapLike:item:lowerbound}. Then we have $\int_0^t v^2(\tau) d\tau \le \frac{V(x(0))}{c}$, and if $t\to +\infty$ we get the required boundedness of the integral of $\ell(x(\cdot))$. Then Fact~\ref{fact:invariance} guarantees that the solution converges to the largest {forward} invariant subset $\M$ of $\Sigma=\{x \colon v=0\}$. We claim that such a subset is $\A$. Indeed, $\M \subset \Sigma$ implies $v=0$ in $\M$. Moreover, $\sigma=0$ in $\M$ because each solution starting from $v=0$ and $\sigma\neq 0$ causes a ramp of $\phi$ that eventually reaches $|\phi| > f_c$ and drives $v$ away from zero (therefore out of $\Sigma$). Finally, in $\M$ we must have $|\phi| \le f_c$ otherwise $v$ would become nonzero again.
Therefore the largest {forward} invariant set $\M$ in $\Sigma$ is the attractor $\A$.
\end{proofof}

\subsection{Proof of Lemma~\ref{lem:lyapLike}}
\label{sec:proofLemLyapLike}

To the end of proving Lemma~\ref{lem:lyapLike}, we note that model~\eqref{eq:xDot} and function~\eqref{eq:lyapLike} suggest that there are three relevant affine systems and smooth functions associated to the three cases in~\eqref{eq:vDot} that are worth considering (and will be used in our proofs). They correspond to
\begin{subequations}
\label{eq:subcasesAffine}
\begin{align}
\dot \xi &= f_1(\xi) := A \xi - b,   & & \xi(0)=\xi_1,  \label{eq:subcasesAffine:f1} \\
\dot \xi &= f_0(\xi) := \smallmat{0 & 0 & 0\\ 1 & 0 & 0 \\ 0 & 0 & 0} \xi , &  & \xi(0)=\xi_0,  \label{eq:subcasesAffine:f0} \\
\dot \xi &= f_{-1}(\xi) := A \xi + b, & &\xi(0)=\xi_{-1}, \label{eq:subcasesAffine:f-1}
\end{align}
and, with the definition $|\xi|^2_P := \xi^T P \xi$,
\begin{equation}
\label{eq:subcasesAffine:Vs}
V_1(\xi)\!:=\!\begin{vmatrix}\smallmat{\sigma\\ \phi-f_c\\v}\end{vmatrix} ^2_P\!, \,V_0(\xi)\!:=\!\begin{vmatrix}\smallmat{\sigma\\ 0 \\ 0}\end{vmatrix}^2_P\!,\,  V_{-1}(\xi)\!:=\!\begin{vmatrix}\smallmat{\sigma\\ \phi+f_c\\v}\end{vmatrix}^2_P.
\end{equation}
\end{subequations}

Based on the description above, we can state the following claim relating solutions to~\eqref{eq:xDot} and $V$ in~\eqref{eq:lyapLike} to~\eqref{eq:subcasesAffine}. Its proof mostly relies on straightforward inspection of the various cases and is given in Appendix~\ref{app:proofSuitAff}.

\begin{claim}
\label{claim:suitableAffine}
There exists $c>0$ such that, for each initial condition $(\bar \sigma,\bar \phi,\bar v)$, one can select $k\in\{-1,0,1\}$ and $T>0$ satisfying the following:
\begin{enumerate}
\item the unique solution $\xi=(\xi_\sigma,\xi_\phi,\xi_v)$ to the $k$-th initial value problem among \eqref{eq:subcasesAffine:f1}-\eqref{eq:subcasesAffine:f-1} with initial condition $\xi_k=(\bar \sigma,\bar \phi,\bar v)$ coincides in $[0,T]$ with the unique solution to~\eqref{eq:xDot}; \label{claim:suitableAffine:suitableK}
\item the solution $\xi$ mentioned above satisfies for all $t\in[0,T]$ \label{claim:suitableAffine:dVdt}
\begin{equation}
\label{eq:dVdt}
V(\xi(t))= V_k(\xi(t)), \quad \tfrac{d}{dt} V_k(\xi(t)) \le - c |\xi_v(t)|^2.
\end{equation}
\end{enumerate}
\end{claim}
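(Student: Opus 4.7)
The plan is to break the argument into three logical steps: choosing the mode $k\in\{-1,0,1\}$ according to the initial condition, showing the coincidence with the $k$-th affine IVP on some $[0,T]$, and then verifying both $V=V_k$ and the dissipation inequality along that piece.

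\textbf{Mode selection.} I would partition $\real^3$ according to the value of $(\bar v,\bar\phi)$. If $\bar v>0$ pick $k=1$; if $\bar v<0$ pick $k=-1$; if $\bar v=0$ pick $k=0$ when $|\bar\phi|\le f_c$ and stick is locally consistent, otherwise pick $k=\sign(\bar\phi)$. By continuity of the unique solution $\xi$ to~\eqref{eq:xDot} granted by Lemma~\ref{lem:uniq}, whenever $\bar v>0$ (resp.\ $<0$), there exists $T>0$ with $\xi_v(t)>0$ (resp.\ $<0$) on $[0,T]$, so $\SGN(\xi_v(t))=\{\sign(\bar v)\}$ and the set-valued inclusion~\eqref{eq:xDot} collapses to the affine ODE $\dot\xi=f_{\sign(\bar v)}(\xi)$. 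When $\bar v=0$ with $|\bar\phi|\le f_c$, the function $\xi(t)=(\bar\sigma,\bar\phi+\bar\sigma t,0)$ solves~\eqref{eq:xDot} by selecting $\bar\phi\in f_c\SGN(0)=[-f_c,f_c]$ and equals the unique solution until $|\bar\phi+\bar\sigma t|$ exits $[-f_c,f_c]$, which provides a strictly positive $T$ and matches~\eqref{eq:subcasesAffine:f0} on $[0,T]$.

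\textbf{Identification $V=V_k$.} For $\xi_v(t)\neq 0$, the set $f_c\SGN(\xi_v(t))$ is the singleton $\{f_c\sign(\xi_v(t))\}$, so the minimization in~\eqref{eq:lyapLikeV} selects $f=f_c\sign(\xi_v(t))$, yielding $V(\xi(t))=V_{\sign(\xi_v(t))}(\xi(t))$. For the stick case $\xi_v(t)=0$ with $|\xi_\phi(t)|\le f_c$, the minimum of $|\xi_\phi-f|^2$ over $f\in[-f_c,f_c]$ is attained at $f=\xi_\phi$ with value zero, so $V(\xi(t))=(k_v/k_i)\xi_\sigma(t)^2=V_0(\xi(t))$. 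Both identifications hold throughout $[0,T]$ by the choice of $T$.

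\textbf{Dissipation inequality.} For $k=\pm1$, introduce the shifted state $\xi_P:=(\xi_\sigma,\xi_\phi\mp f_c,\xi_v)^T$. A key algebraic observation is $A(0,f_c,0)^T=b$, which gives $\dot\xi_P=A\xi_P$ for both $f_1$ and $f_{-1}$; therefore $\tfrac{d}{dt}V_k(\xi)=\xi_P^T(PA+A^TP)\xi_P$. Direct computation (using the definitions of $P$ and $A$) yields $PA+A^TP=\mathrm{diag}(0,0,\,2(k_i-k_vk_p))$, so by Assumption~\ref{ass:PIDpars} one obtains $\tfrac{d}{dt}V_k(\xi)=-c\,\xi_v^2$ with $c:=2(k_vk_p-k_i)>0$. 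For $k=0$, the dynamics~\eqref{eq:subcasesAffine:f0} give $\dot\xi_\sigma=0$, hence $\tfrac{d}{dt}V_0=0$, and the inequality $0\le -c\cdot 0$ holds trivially since $\xi_v\equiv0$ on $[0,T]$. The same constant $c$ works in all three cases.

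\textbf{Main obstacle.} The delicate point is the degenerate case $\bar v=0$, $|\bar\phi|=f_c$: here $\dot\xi_v$ is set-valued at $t=0$ and one must decide whether the unique solution starts in stick or in slip. I would resolve this by appealing to uniqueness (Lemma~\ref{lem:uniq}) and checking that, depending on the sign of $\bar\sigma$, exactly one of the candidates $k\in\{0,\sign(\bar\phi)\}$ produces a trajectory that is consistent with~\eqref{eq:xDot} on a nontrivial right-neighbourhood of $0$. Once this case-splitting is settled, the remaining computations are routine and reduce to the single matrix identity $PA+A^TP=\mathrm{diag}(0,0,-c)$ that drives the whole Lyapunov analysis.
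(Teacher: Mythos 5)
Your proposal is correct and follows essentially the same route as the paper's proof: a case split on the initial condition to select the active affine mode, identification with the unique solution of~\eqref{eq:xDot} on a short interval via Lemma~\ref{lem:uniq}, and the computation $PA+A^TP=\mathrm{diag}(0,0,-2(k_vk_p-k_i))$ yielding the same constant $c=2(k_vk_p-k_i)$ that the paper uses. The boundary cases $\bar v=0$, $|\bar\phi|=f_c$ that you flag as the main obstacle are resolved in the paper exactly as you suggest — sub-splitting on $\sign(\bar\sigma)$ and checking consistency with the inclusion, where the only extra wrinkle is that for $\bar\phi=f_c$, $\bar\sigma>0$ one must pass to $\ddot\xi_v(0)=\bar\sigma>0$ since $\dot\xi_v(0)=0$.
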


\medskip

Additionally, we restate a fact from~\cite{hagood2006recovering} that is beneficial to proving Lemma~\ref{lem:lyapLike}. Specifically, we use \cite[Theorem~9]{hagood2006recovering} together with the variant in \cite[Section~5 (point~a.)]{hagood2006recovering}. We also specialize the statement, using the fact that when the function $g$ is integrable, the standard integral can replace the upper integral (as noted after \cite[Definition~8]{hagood2006recovering}). The lower right Dini derivative $\Di h$ of $h$ is defined as $\Di h(t):= \liminf_{\epsilon \to0^+} \frac{h(t+\epsilon)-h(t)}{\epsilon}$.

\begin{fact}
\label{fact:bound}
\cite{hagood2006recovering} Given $t_2 > t_1 \ge 0$, suppose that $h$ satisfies $\liminf_{\tau\to \bar \tau}h(\tau) \ge h(\bar \tau)$ (i.e., $h$ is lower semicontinuous) and that $l$ is locally integrable in $[t_1,t_2]$. If $\Di h(\tau) \le l(\tau)$ for all $\tau\in[t_1,t_2]$, then
\begin{equation*}
h(t_2)-h(t_1) \le \int_{t_1}^{t_2} l(\tau) d\tau.
\end{equation*}
\end{fact}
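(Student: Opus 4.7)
The plan is to partition the state space by the initial condition $(\bar\sigma,\bar\phi,\bar v)$ and, on each piece, exhibit one of the affine dynamics in~\eqref{eq:subcasesAffine:f1}--\eqref{eq:subcasesAffine:f-1} whose flow from that initial condition, on a short interval $[0,T]$, is also a solution to the inclusion~\eqref{eq:xDot}. My selection rule is: pick $k=\sign(\bar v)$ whenever $\bar v\ne 0$; when $\bar v=0$, pick $k=0$ whenever the stick flow $f_0$ keeps $|\xi_\phi(t)|\le f_c$ on a short interval (which happens if $|\bar\phi|<f_c$, or if $|\bar\phi|=f_c$ and $\bar\sigma$ has the sign that makes $\xi_\phi(t)=\bar\phi+\bar\sigma t$ move into $[-f_c,f_c]$), and otherwise pick $k=\sign(\bar\phi)$. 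By continuity of the affine solutions, a $T>0$ can be chosen so that on $[0,T]$ the $f_k$-flow lies in the region in which $f_k(\xi(t))$ is compatible with~\eqref{eq:xDot}: $\xi_v(t)>0$ throughout $(0,T]$ for $k=1$ (so $\SGN(\xi_v(t))=\{1\}$), the analogous condition for $k=-1$, or $\xi_v\equiv 0$ with $|\xi_\phi(t)|\le f_c$ for $k=0$ (so the selection $\xi_\phi/f_c\in[-1,1]$ in $\SGN(0)$ gives $\dot\xi_v=0$). Combined with the uniqueness of solutions to~\eqref{eq:xDot} from Lemma~\ref{lem:uniq}, this establishes item~\ref{claim:suitableAffine:suitableK}.

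For item~\ref{claim:suitableAffine:dVdt}, I would first check the value identity $V(\xi(t))=V_k(\xi(t))$ by inspecting the minimization in~\eqref{eq:lyapLikeV}: since the off-diagonal entries of $P$ involving the $\phi$ coordinate vanish, the problem decouples to $\min_{f\in f_c\SGN(v)}(\phi-f)^2$, solved at $f=f_c\sign(v)$ when $v\ne 0$ and at $f=f_c\sat(\phi/f_c)$ when $v=0$. The case selection above was engineered so that in each region this minimizer is exactly the shift used in $V_k$. For the derivative bound, the key algebraic identity is
\begin{equation*}
A^{\top}P+PA=\mathrm{diag}\bigl(0,\,0,\,2(k_i-k_p k_v)\bigr),
\end{equation*}
verified by direct multiplication of $P$ and $A$ from~\eqref{eq:lyapLikeP} and~\eqref{eq:xDot}. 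Under Assumption~\ref{ass:PIDpars} the scalar $c:=2(k_p k_v-k_i)$ is strictly positive. Applied to the shifted variable $y:=(\sigma,\phi-kf_c,v)$, which satisfies $\dot y=Ay$ along $f_k$ for $k\in\{-1,1\}$, this gives $\tfrac{d}{dt}V_k(\xi(t))=y^{\top}(A^{\top}P+PA)y=-c\,\xi_v(t)^2$. For $k=0$, $V_0$ depends only on $\sigma$, and $\dot\sigma=0$, $\xi_v\equiv 0$ along $f_0$, so both sides of~\eqref{eq:dVdt} vanish.

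The main obstacle is the delicate boundary situation $\bar v=0$ with $|\bar\phi|=f_c$, where the decision between stick ($k=0$) and slip ($k=\pm 1$) hinges on the sign of $\bar\sigma$ and on whether the inclusion immediately develops nonzero velocity or remains at rest. At $\bar v=0$, $\bar\phi=f_c$, $\bar\sigma>0$ one picks $k=1$ (since the $f_1$-flow has $\dot\xi_v(0)=0$ but $\ddot\xi_v(0)=\bar\sigma>0$, so $\xi_v(t)>0$ for small $t>0$), while at $\bar v=0$, $\bar\phi=f_c$, $\bar\sigma\le 0$ one picks $k=0$ (stick persists); the symmetric choice handles $\bar\phi=-f_c$. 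Once this dispatch is completed, the remaining verifications amount to routine continuity arguments on the short-time behavior of each affine flow together with the Lyapunov equation sketched above.
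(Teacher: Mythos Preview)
Your proposal does not address the stated Fact~\ref{fact:bound} at all. Fact~\ref{fact:bound} is a pure real-analysis statement: if $h$ is lower semicontinuous on $[t_1,t_2]$ and its lower right Dini derivative is bounded pointwise by a locally integrable function $l$, then $h(t_2)-h(t_1)\le\int_{t_1}^{t_2}l$. The paper does not prove this; it is quoted from \cite{hagood2006recovering} and used as a black box. What you have written is instead a proof of Claim~\ref{claim:suitableAffine} (the case-by-case identification of the solution to the inclusion~\eqref{eq:xDot} with one of the affine flows $f_k$, together with the Lyapunov-derivative computation). Your argument for that claim is essentially the paper's own argument in Appendix~\ref{app:proofSuitAff}, including the same case split, the same boundary analysis at $\bar v=0$, $|\bar\phi|=f_c$ via the sign of $\bar\sigma$ and the second derivative $\ddot\xi_v(0)$, and the same constant $c=2(k_pk_v-k_i)$ from $A^{\top}P+PA$.

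So the gap is simply that you have proved the wrong statement. If the intent was to prove Claim~\ref{claim:suitableAffine}, your sketch is correct and coincides with the paper's approach. If the intent was truly to prove Fact~\ref{fact:bound}, none of what you wrote is relevant: that requires an argument in the spirit of \cite{hagood2006recovering} (e.g., a covering or Cousin-lemma argument showing that the Dini-derivative bound propagates to an integral bound), and no reference to the dynamics~\eqref{eq:xDot} or the functions $V_k$ should appear.
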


Building on Claim~\ref{claim:suitableAffine} and Fact~\ref{fact:bound} we can prove Lemma~\ref{lem:lyapLike}.

\begin{proofof}{\em Lemma~\ref{lem:lyapLike}.} We show first that $V$ is lower semicontinuous. Define the set-valued mapping 
\[
G(x)\!:=\!\!\!\!\!\!\!\! \bigcup_{f\in\SGN(v)} \!\!\!\!\!\!\! g(\sigma,\phi,v,f), \,\,\, g(\sigma,\phi,v,f)\!:=\!\smallmat{
\sigma\\
\phi-f\\
v}^T\!\! P \smallmat{
\sigma\\
\phi-f\\
v},
\]
and consider the additional set-valued mapping $(\sigma,\phi,v)\rightrightarrows H(\sigma,\phi,v):=(\sigma,\phi,v,f_c\SGN(v))$. By the very definition of set-valued mapping, we can write $G=g\circ H$ (the composition of $g$ and $H$), that is, $(\sigma,\phi,v)\rightrightarrows g(\sigma,\phi,v,f_c\SGN(v))=G(x)$. Then, $G$ is outer semicontinuous (osc) by \cite[Proposition~5.52, item~(b)]{rockafellar2009variational} because both $g$ and $H$ are osc and $H$ is locally bounded. Finally, by the definition of distance $d(u,S)$ between a point $u$ and a closed set $S$, we can write $V(x)=d(0,G(x))$. Then, $V$ is lsc by \cite[Proposition~5.11, item~(a)]{rockafellar2009variational} because $G$ was proven to be osc. 

We prove now the properties of $V$ item by item.

\textit{Item 1).} There exists $\mathfrak{g}>0$ such that $\smallmat{\sigma\\ v}^T \smallmat{\frac{k_v}{k_i} & -1\\ -1 & k_p} \smallmat{\sigma\\ v} \ge \mathfrak{g} (\sigma^2 + v^2)$  because the inner matrix is positive definite by Assumption~\ref{ass:PIDpars}. Moreover, from~\eqref{eq:lyapLikeV}, $\min\limits_{f\in f_c \SGN(v)} \big( \phi-f \big)^2 \ge \min\limits_{f\in [-f_c,f_c]} \big( \phi-f \big)^2=\dz(\phi)^2$. Therefore, \eqref{eq:distFromA} yields $V(x)\ge c_1 |x|_\A^2$ with $c_1:=\min\{\mathfrak{g},1\}$.

\textit{Item 2).} Equation~\eqref{eq:Vdecrease} is a mere application of Fact~\ref{fact:bound} for $h(\cdot)=V(x(\cdot))$ and $l(\cdot)=-c (v(\cdot))^2$ where $x=(\sigma,\phi,v)£$ is a solution to~\eqref{eq:xDot}. So, we need to check that the assumptions of Fact~\ref{fact:bound} are verified.

We already established above that $V$ is lsc. Solutions $x$ to~\eqref{eq:xDot} are absolutely continuous functions by definition.
Then, because the composition of a lower semicontinuous and a continuous function is lower semicontinuous (see \cite[Exercise~1.40]{rockafellar2009variational}), the Lyapunov-like function~\eqref{eq:lyapLikeV} evaluated along the solutions of~\eqref{eq:xDot} is lsc. Since solutions are absolutely continuous, $-c v^2$ is locally integrable.

Finally, it was proven in Claim~\ref{claim:suitableAffine}, item~\ref{claim:suitableAffine:suitableK} that for each initial condition, the unique solution to~\eqref{eq:xDot} coincides with the solution to one of the three affine systems in~\eqref{eq:subcasesAffine} (numbered $k$) on a finite time interval $T$. Moreover, from Claim~\ref{claim:suitableAffine}, item~\ref{claim:suitableAffine:dVdt} $V$ coincides in $[0,T]$ with the function $V_k$ in \eqref{eq:dVdt}, which is differentiable, then $V(x(t))$ at $t=0$ is at least differentiable from the right and the lower right Dini derivative coincides with the right derivative. In particular, we established in~\eqref{eq:dVdt} that this right derivative is upper bounded by $-c v^2$.
\end{proofof}

\section{Proof of Prop.~\ref{prop:GA+S}: stability}
\label{sec:stab}

\begin{figure}[!t]
\centering
\includegraphics[width=2.2in]{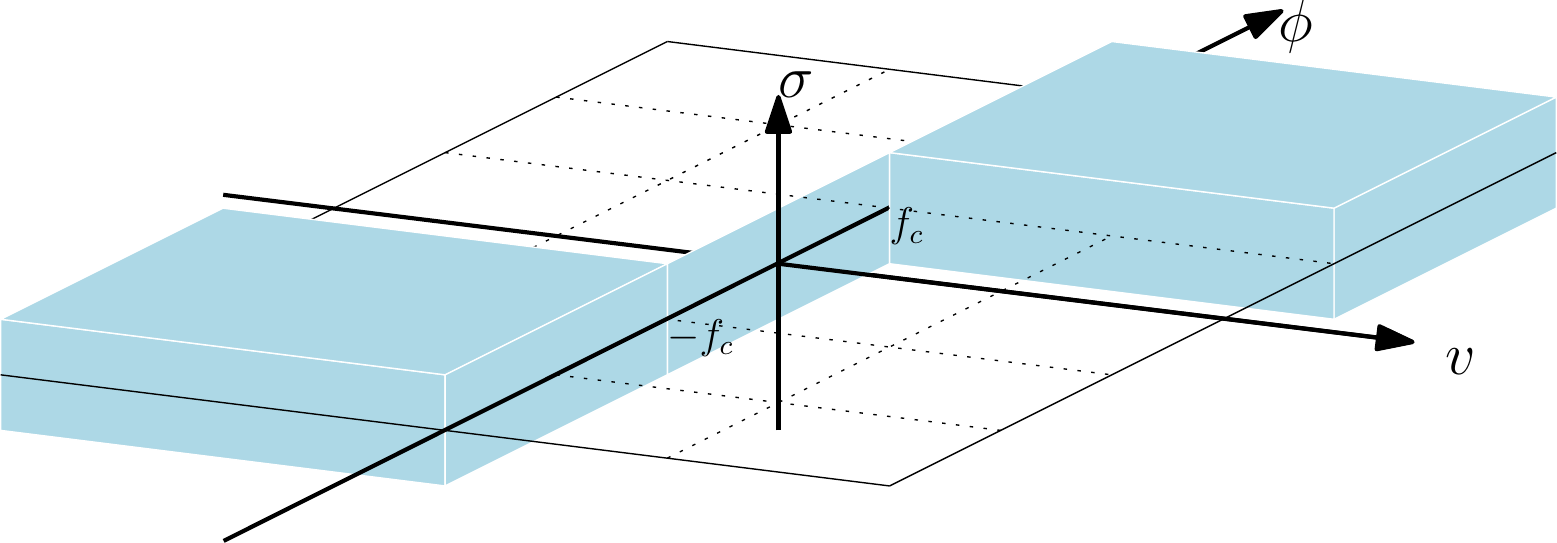}
\caption{$R$ is the (closed) blue region in Lemma~\ref{lem:lyapStab}, $\hat R$ is its complement.}
\label{fig:whiteBlue}
\end{figure}

The Lyapunov-like function introduced in~\eqref{eq:lyapLike} of the previous section is unfortunately not enough to prove stability. Indeed, its discontinuity on the attractor $\A$ prevents us from obtaining a uniform continuous upper bound depending on the distance from $\A$. 
However, a stability bound can be constructed through an auxiliary function defined as
\begin{equation}
\label{eq:Vhat}
\hat V(x)  := \tfrac{1}{2} k_1 \sigma^2 + \tfrac{1}{2} k_2 \big(\dz(\phi)\big)^2+k_3 |\sigma| |v| + \tfrac{1}{2} k_4 v^2.
\end{equation}
Function $\hat V$ allows establishing bounds in the directions of discontinuity of $V$. In particular, we define the two subsets
\begin{equation*}
\begin{split}
R &:=\{x \colon v(\phi -\sign(v)f_c ) \ge 0\} \\
\hat R &:=\real^3\backslash R\\
\end{split}
\end{equation*}
represented in Figure~\ref{fig:whiteBlue}. The following holds.
\begin{lemma}
\label{lem:lyapStab}
For suitable positive scalars $k_1,\dots,k_4$ in~\eqref{eq:Vhat}, there exist positive scalars $c_1$, $c_2$, $\hat c_1$, $\hat c_2$ such that
\begin{subequations}
\begin{align}
& c_1 |x|_\A^2 \le V(x) \le c_2 |x|_\A^2,\, &{ \forall } x\in R, \label{eq:VandR}\\
& \hat c_1 |x|_\A^2 \le \hat V(x) \le \hat c_2 |x|_\A^2,\, &{ \forall } x\in \hat R, \label{eq:hatVandHatR}\\
&  {\hat V}^\circ(x)  := \max_{\mathfrak{v} \in\partial\hat V(x),\mathfrak{f}\in F(x)} \langle \mathfrak{v},\mathfrak{f} \rangle  \le 0,\, &{ \forall } x \in \hat R,\label{eq:hatVdot}
\end{align}
\end{subequations}
where $\partial \hat V(x)$ denotes the generalized gradient of $\hat V$ at $x$ (see~\cite[\S1.2]{clarke1990optimization}) and $F$ is the set-valued map in~\eqref{eq:xDot}.
\end{lemma}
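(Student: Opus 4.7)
The lemma has three claims; the real content is the Clarke-derivative inequality \eqref{eq:hatVdot}, while \eqref{eq:VandR} and \eqref{eq:hatVandHatR} are straightforward quadratic-form estimates.

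For \eqref{eq:VandR}, the key observation is that on $R$ the minimization in \eqref{eq:lyapLikeV} is always attained at $f = f_c\sign(v)$. For $v\neq 0$ this is automatic; the defining inequality of $R$ then forces $\phi - f_c\sign(v)$ to equal $\dz(\phi)$ (same sign as $v$, magnitude $|\phi|-f_c$). For $v=0$ the minimum over $f\in[-f_c,f_c]$ of $(\phi-f)^2$ is just $\dz(\phi)^2$. Hence on $R$,
\begin{equation*}
V(x) = |(\sigma,\,\dz(\phi),\,v)|_P^2,
\end{equation*}
and since $\det P = k_v k_p/k_i - 1 > 0$ by Assumption~\ref{ass:PIDpars}, $P$ is positive definite, giving \eqref{eq:VandR} with $c_1 = \lambda_{\min}(P)$, $c_2 = \lambda_{\max}(P)$, and $|x|_\A^2 = \sigma^2 + \dz(\phi)^2 + v^2$. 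For \eqref{eq:hatVandHatR}, the lower bound follows from $k_3|\sigma||v|\geq 0$ and the upper bound from Young's inequality $k_3|\sigma||v|\leq \tfrac{k_3}{2}(\sigma^2+v^2)$, so this sandwich is in fact global in $\real^3$ for any positive $k_1,\dots,k_4$; the real constraints on these coefficients come from \eqref{eq:hatVdot}.

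For \eqref{eq:hatVdot}, first note $\hat R\subset\{v\neq 0\}$, so $F(x)$ is single-valued on $\hat R$, and $\hat V$ is smooth in $\phi$ and $v$ there (using that $\dz(\phi)^2$ is $C^1$ with derivative $2\dz(\phi)$); the only subdifferential sits in the $\sigma$-direction at $\sigma=0$, where $\partial_\sigma\hat V = k_3|v|[-1,1]$. By symmetry I will work on $\hat R_+ := \{v>0,\,\phi<f_c\}$. Expanding $\langle\mathfrak v,\mathfrak f\rangle$ along the dynamics produces three groups: the \emph{friction-induced} terms $k_3|\sigma|(\phi-f_c)$ and $k_4 v(\phi-f_c)$, both nonpositive on $\hat R_+$; the \emph{damping} terms $-k_3 k_v|\sigma|v$ and $-k_4 k_v v^2$; and the \emph{indefinite cross terms} $-k_1 k_i\sigma v$, $-k_3 k_i\sign(\sigma)v^2$, $k_2\sigma\dz(\phi)$, and $-k_2 k_p v\dz(\phi)$. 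On the portion $\phi<-f_c$ the identity $\phi - f_c = \dz(\phi) - 2 f_c$ splits the friction-induced terms into a $|\dz(\phi)|$-part plus a strictly negative $f_c$-part. Collecting by common factors $|\sigma||v|$, $v^2$, $|\sigma||\dz(\phi)|$, and $v|\dz(\phi)|$, and imposing
\begin{equation*}
k_3\geq k_2,\quad k_4\geq k_2 k_p,\quad k_3 k_v\geq k_1 k_i,\quad k_4 k_v\geq k_3 k_i
\end{equation*}
(achievable for any positive $k_1,k_2$, e.g.\ by picking $k_3,k_4$ large) makes every collected coefficient nonpositive. The subdifferential case $\sigma=0$ contributes at most $k_3 k_i v^2$, obtained by choosing $\mathfrak v_1 = -k_3 v\sign(v)$, which is absorbed by $-k_4 k_v v^2$ under $k_4 k_v\geq k_3 k_i$; the mirror half $\hat R_-$ is handled identically.

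The main obstacle is bookkeeping the indefinite cross terms: $-k_2 k_p v\dz(\phi)$ has the \emph{wrong} sign on $\hat R$ (positive there), and $-k_3 k_i\sign(\sigma)v^2$ is positive half the time. Both are only tamed by strictly exploiting the friction-induced negative terms $k_3|\sigma|(\phi-f_c\sign(v))$ and $k_4 v(\phi-f_c\sign(v))$, which is why the statement is restricted to $\hat R$ rather than to all of $\real^3$.
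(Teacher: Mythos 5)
Your proposal is correct and follows essentially the same route as the paper's proof: the same identification $V(x)=\smallmat{\sigma\\ \dz(\phi)\\ v}^T P\smallmat{\sigma\\ \dz(\phi)\\ v}$ on $R$, the same restriction to $\hat R_>=\{v>0,\ \phi<f_c\}$ where $F$ is single-valued, and the same term-by-term domination yielding exactly the paper's four conditions $k_3\ge\max\{k_2,\tfrac{k_i}{k_v}k_1\}$, $k_4\ge\max\{k_pk_2,\tfrac{k_i}{k_v}k_3\}$. The only (immaterial) difference is that you establish the sandwich bound on $\hat V$ termwise, so you can drop the paper's extra requirement $k_1k_4>k_3^2$ used there to make the inner matrix positive definite.
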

\begin{proof}
\begin{subequations}
Note that $\min\limits_{f\in f_c \SGN(v)} \big( \phi-f \big)^2 = \dz (\phi)^2$ whenever $x\in R$. Since $P$ in~\eqref{eq:lyapLikeP} is positive definite and $V(x)=\smallmat{\sigma\\ \dz(\phi)\\ v}^T P\smallmat{\sigma\\ \dz(\phi)\\ v}$ in $R$, positive $c_1$ and $c_2$ can be chosen to satisfy \eqref{eq:VandR}, using the definition~\eqref{eq:distFromA}. (The lower bound in~\eqref{eq:VandR} was already established for all $x\in\real^3$ in Lemma~\ref{lem:lyapLike}, item~\ref{lem:lyapLike:item:lowerbound}.) For positive $k_1,\dots,k_4$ and $k_1 k_4 > k_3^2$, the inner matrix in $\hat V(x)=\frac{1}{2} \smallmat{|\sigma|\\ |\dz(\phi)| \\ |v| }^T \smallmat{{k_1}  &  0 & {k_3}\\ 0 & {k_2} & 0 \\ {k_3}  &  0 & {k_4}} \smallmat{|\sigma|\\ |\dz(\phi)| \\ |v|}$ is positive definite and \eqref{eq:hatVandHatR} can be satisfied for the same reason.

To prove~\eqref{eq:hatVdot}, we consider only the set  $\hat R_>:=\hat R \cap \{x \colon v>0\}$ because a parallel reasoning can be followed in $\hat R \cap \{x \colon v<0\}$. For $x\in\hat R_>$, we have $v>0$, $\phi<f_c$ and \eqref{eq:xDot}~reduces to the differential equation
\begin{equation}
\label{eq:diffEqForHatV}
\begin{aligned}
\dot \sigma & = - k_i v =: \mathfrak{f}_\sigma(x) \\
\dot \phi & = \sigma - k_p v =: \mathfrak{f}_\phi(x) \\
\dot v & = - k_v v + \phi - f_c =: \mathfrak{f}_v(x) \le - k_v |v| -|\dz(\phi)|. \\
\end{aligned}
\end{equation}
Consistently, the $\max$ in~\eqref{eq:hatVdot} is to be checked only for the singleton $\mathfrak{f} = (\mathfrak{f}_\sigma(x),\mathfrak{f}_\phi(x),\mathfrak{f}_v(x))$ which $F(x)$ reduces to for all $x\in\hat R_>$. Moreover,
\begin{equation}
\label{eq:almostEverywhereDerivatives}
\tfrac{d}{d\phi}\left(\tfrac{1}{2} \big(\dz(\phi)\big)^2 \right)=\dz(\phi),\,\,\partial\left({|\sigma|}\right)= \SGN(\sigma),
\end{equation}
where $\partial\left({|\sigma|}\right)$ denotes the generalized gradient of $\sigma \mapsto |\sigma|$. We need then to find suitable positive constants $k_1,\dots,k_4$ satisfying $k_1 k_4 > k_3^2$ and such that ${\hat V}^\circ(x)$ is negative semidefinite in $\hat R_>$. 
Since in $\hat R_>$we have $v=|v|$ and $\dz(\phi)=-|\dz(\phi)|$, then we get 
$\max_{\zeta \in \partial|\sigma|}(-k_i k_3|v|^2 \zeta)=k_i k_3 |v|^2$ for all $x \in \hat R_>$, which gives in turn:
\begin{equation*}
\begin{aligned}
{\hat V}^\circ(x) \!   \le \!  [k_i k_3 |v|^2\!\!  -k_4 k_v |v|^2]\!  + \![k_2 \sigma \dz(\phi)\! - k_3 |\sigma| |\dz(\phi)| ]\\
+[ -k_1 k_i \sigma v \!- k_3 k_v | v | |\sigma| ] \! + \![k_2 k_p |v | |\dz(\phi) | \!-\!k_4 |v| |\dz(\phi)| ]. \\
\end{aligned}
\end{equation*}
Since $k_1,\dots, k_4$ are positive by assumption, in each pair in brackets the second term is negative semidefinite and dominates the first (sign-indefinite or nonnegative) term as long as $k_3>\max\big\{ \frac{k_i}{k_v}k_1,k_2\big\}$ and $k_4 > \max\big\{\frac{k_i}{k_v}k_3,k_p k_2,\frac{k_3^2}{k_1}\big\}$. With this selection, \eqref{eq:hatVandHatR} and \eqref{eq:hatVdot} are simultaneously satisfied.
\end{subequations}
\end{proof}

\begin{proofof}{\em item 2) of Proposition~\ref{prop:GA+S} (stability).}
\begin{subequations}
Based on the constants $c_1$, $c_2$, $\hat c_1$, $\hat c_2$ introduced in Lemma~\ref{lem:lyapStab}, the following stability bound for each solution $x$ to~\eqref{eq:xDot}
\begin{equation}
\label{eq:stabState}
|x(t)|_\A \le \sqrt{\tfrac{c_2\hat c_2}{c_1\hat c_1}} |x(0)|_\A, \,\, \forall t \ge 0
\end{equation}
is proven by splitting the analysis in two cases.\newline
\emph{Case~(i):} $x(t) \notin R$, $\forall t \ge 0$. Since $R \cup \hat R = \real^3$, $x(t)\in \hat R$ for all $t\ge 0$ and from~\eqref{eq:hatVdot} 
\begin{equation*}
\hat V^\circ(x(t)) \le 0,\, \forall t \ge 0 \Rightarrow \hat V(x(t)) \le \hat V(x(0)),\, \forall t \ge 0.
\end{equation*}
Using bound~\eqref{eq:hatVandHatR} we obtain
\begin{equation*}
\hat c_1 |x(t)|_\A^2 \le \hat V(x(t)) \le \hat V(x(0)) \le \hat c_2 |x(0)|_\A^2,\,\, \forall t \ge 0,
\end{equation*}
which implies~\eqref{eq:stabState} because $1 \le \sqrt{{c_2}/{c_1}}$  from~\eqref{eq:VandR}.\newline
\emph{Case~(ii):} $\exists t_1 \ge 0$ such that $x(t_1) \in R$.  Consider the smallest $t_1\ge 0$ such that $x(t_1)\in R$ (the existence of such a \emph{smallest} time follows from $R$ being closed). Then, following the analysis of Case~(i) for the (possibly empty) time interval $[0,t_1)$ and using continuity of solutions, we obtain
\begin{equation}
\label{eq:boundInHatR}
\hat c_1 |x(t)|^2_\A \le \hat c_2 |x(0)|^2_\A, \,\, \forall t\in [0,t_1].
\end{equation}
At $t_1$ we apply \eqref{eq:VandR} (because $x(t_1)\in R$) and \eqref{eq:boundInHatR} to obtain $V(x(t_1))\le c_2\big( \frac{\hat c_2}{{\hat c_1}} |x(0)|^2_\A\big)$. Finally, by the bounds in items~\ref{lem:lyapLike:item:lowerbound} and \ref{lem:lyapLike:item:decrease} of Lemma~\ref{lem:lyapLike},
\begin{equation}
\label{eq:boundInHatRandInR}
c_1 |x(t)|_\A^2 \le V(x(t)) \le V(x(t_1)) \le  c_2 \tfrac{\hat c_2}{{\hat c_1}} |x(0)|^2_\A,\, \forall t\ge t_1.
\end{equation}
Since $\sqrt{\tfrac{c_2}{c_1}}\ge1$, \eqref{eq:boundInHatR} implies $c_1 |x(t)|_\A^2 \le  c_2 \tfrac{\hat c_2}{{\hat c_1}} |x(0)|^2_\A,\, \forall t\in [0,t_1]$, which proves \eqref{eq:stabState} when combined with~\eqref{eq:boundInHatRandInR}.
\end{subequations}
\end{proofof}

\section{Conclusions and future work}
In this work we characterized properties of a differential inclusion model of the feedback interconnection of a sliding mass with a PID controller under Coulomb friction. We proved global asymptotic stability of the largest set of closed-loop equilibria. Due to the regularity properties of the differential inclusion model, global asymptotic stability is intrinsically robust. Additionally, taking as input the size of the inflation of a perturbed model, the dynamics is input-to-state stable, and this perturbation includes the well-known Stribeck effect. Future work will address further the case of static friction force larger than the Coulomb one and will propose for that setting compensation schemes relying on the proposed Lyapunov-based proof.

\bibliographystyle{IEEEtranS}
\bibliography{IEEEabrv,refs}

\appendices
\section{Proof of Claim~\ref{claim:suitableAffine}}
\label{app:proofSuitAff}

\begin{proofof}{\em Claim~\ref{claim:suitableAffine}.}
The proof of the claim shows for each possible initial conditions that for a suitable $k$ the solution to the affine system $\dot \xi = f_k(\xi)$ among~\eqref{eq:subcasesAffine:f1}-\eqref{eq:subcasesAffine:f-1} is also solution to~\eqref{eq:xDot} on the interval $[0,T]$. By Lemma~\ref{lem:uniq}, this is also the unique solution to~\eqref{eq:xDot} on the same interval. The suitable $k$ for each initial condition is listed in the next table. 
\begin{equation*}
\begin{array}{llr}
\toprule
\text{Case} & \text{Possible initial condition} & k\\
\midrule
\text{(i)} & \bar v > 0 & 1\\
\text{(ii)} & \bar v=0, \bar \phi > f_c & 1\\
\text{(iii)} & \bar v = 0, \bar \phi = f_c, \bar \sigma >0 & 1\\
\text{(iv)} & \bar v = 0, \bar \phi = f_c, \bar \sigma = 0 & 0\\
\text{(v)} & \bar v = 0, \bar \phi = f_c, \bar \sigma < 0 & 0\\
\text{(vi)} & \bar v = 0, |\bar \phi| < f_c & 0\\
\text{(vii)} & \bar v = 0, \bar \phi = - f_c, \bar \sigma > 0 & 0\\
\text{(viii)} & \bar v = 0, \bar \phi = - f_c, \bar \sigma = 0 & 0\\
\text{(ix)} & \bar v = 0, \bar \phi = - f_c, \bar \sigma < 0 & -1\\
\text{(x)} & \bar v = 0, \bar \phi < - f_c & -1\\
\text{(xi)} & \bar v < 0 & -1\\
\bottomrule
\end{array}
\end{equation*}
Define the components of the solution $\xi$ as $(\xi_\sigma,\xi_\phi,\xi_v)$. We first prove item 1) and the left equation in~\eqref{eq:subcasesAffine:Vs}, addressing only cases (i)-(v) as the remaining ones follow from parallel arguments.

\textit{Case}~(i). We choose $k=1$, that is, \eqref{eq:subcasesAffine:f1}. Because $\xi_v(0)=\bar v>0$, there exists a finite time $T$ such that for all $t\in [0,T]$ $\xi_v(t)>0$. When the resulting solution $\xi$ is substituted into~\eqref{eq:xDot}, $-f_c \SGN(\xi_v(t))=\{-f_c\}$ for all $t\in [0,T]$ so that for all $t\in [0,T]$ \eqref{eq:xDot} becomes $\dot \xi(t) = A \xi(t) - b$, and this is satisfied precisely because $\xi$ arises from~\eqref{eq:subcasesAffine:f1} ($k=1$). Then the solution $\xi$ is also a solution to~\eqref{eq:xDot} for $t\in [0,T]$ because they have the same initial conditions and $\dot \xi(t) \in F(\xi(t))$. $V(\xi(t))=V_1(\xi(t))$ for all $t\in[0,T]$ because $f=f_c$ is the only possible selection in~\eqref{eq:lyapLikeV} due to $\xi_v(t)>0$ in $[0,T]$.

\textit{Case}~(ii). We prove that from~\eqref{eq:subcasesAffine:f1} (that is, $k=1$) and initial conditions $\xi(0)=(\bar \sigma,\bar \phi,0)$ with $\bar \phi>f_c$ it is implied that there exist $T>0$ such that for all $t\in(0,T]$, $\xi_v(t)>0$. Then, as in Case~(i), we can conclude that $\xi$ is also a solution to~\eqref{eq:xDot}. Indeed, the third state equation of~\eqref{eq:subcasesAffine:f1} reads $\dot \xi_v = \xi_\phi - k_v \xi_v - f_c$ with $\xi_v(0)=0$, $\xi_\phi(0)=\bar \phi>f_c$, so that $\dot \xi_v (0)= \xi_\phi(0)- f_c>0$. Also here $V(\xi(t))=V_1(\xi(t))$ because $\xi_v(t)>0$ for all $t\in(0,T]$ and $\xi_\phi(0)>f_c$ so that the minimizer in~\eqref{eq:lyapLikeV} at $t=0$ is also $f=f_c$.

\textit{Case}~(iii). We prove that from~\eqref{eq:subcasesAffine:f1} ($k=1$) and initial conditions $\xi(0)=(\bar \sigma,f_c,0)$ with $\bar \sigma>0$ it is implied that there exist $T>0$ such that for all $t\in(0,T]$, $\xi_v(t)>0$. Then, as in Case~(i), we can conclude that $\xi$ is also a solution to~\eqref{eq:xDot}. Indeed, the second state equation of~\eqref{eq:subcasesAffine:f1} reads in this case $\dot \xi_\phi = \xi_\sigma - k_p \xi_v$ and $\dot \xi_\phi (0)=\xi_\sigma(0)>0$, so that there exists $T$ such that $\phi(t) > f_c$ for all $t\in(0,T]$. The third state equation of~\eqref{eq:subcasesAffine:f1} reads $\dot \xi_v = \xi_\phi - k_v \xi_v -f_c$ and is $\dot \xi_v(0) = 0$ when evaluated at time $0$. Therefore, to show that $\xi_v(t)>0$ for all $t\in(0,T]$, we need to differentiate the third equation and get $\ddot \xi_v(0) = \bar \sigma>0$ and by this we conclude that $\dot \xi_v(t)>0$ in $(0,T]$ and then also $\xi_v(t)>0$ in $(0,T]$. A similar reasoning to the previous case shows that $V(\xi(t))=V_1(\xi(t))$, for all $t\in[0,T]$.

\textit{Case}~(iv). By choosing $k=0$, we can verify that the solution to~\eqref{eq:xDot} and \eqref{eq:subcasesAffine:f0} is constant (specifically, an equilibrium). Then $V(\xi(t))=V_0(\xi(t))$ for all $t\in[0,+\infty)$ because $V(\xi(0))=V_0(\xi(0))$.

\textit{Case}~(v). Here we have the explicit solution $\xi_\sigma(t)=\bar \sigma<0$, $\xi_\phi(t) = f_c+\bar \sigma t$, $\xi_v(t)=0$, for the interval $[0,T]=[0,-{2f_c}/{\bar \sigma}]$. This solution satisfies \eqref{eq:xDot} because an appropriate selection of the value in $\SGN(0)$ obtains $\xi_\phi(t) -f_c\SGN(\xi_v(t))=\xi_\phi(t) -f_c \SGN(0)=\{0\}$ for all $t\in[0,-{2f_c}/{\bar \sigma}]$, as in this interval $|\xi_\phi(t)|\le f_c$.
In this case the minimizer in~\eqref{eq:lyapLikeV} is $f(t)=\xi_\phi(t)$ for all $t\in[0,T]$, so that $V(\xi(t))=V_0(\xi(t))$ for all $t\in[0,T]$.

Finally, $c=2( k_v k_p -k_i)>0$ (by Assumption~\ref{ass:PIDpars}) is such that the right inequality in~\eqref{eq:dVdt} holds for each $k\in\{-1,0,1\}$ regardless of the initial condition and case under consideration. For $k=1$ write $\frac{d}{dt} V_1(\xi(t))=\frac{d}{dt}\bigg(\smallmat{\xi_\sigma\\ \xi_\phi-f_c\\ \xi_v}^T P \smallmat{\xi_\sigma\\ \xi_\phi-f_c\\ \xi_v}\bigg)=(A\xi -b)^T P \smallmat{\xi_\sigma\\ \xi_\phi-f_c\\ \xi_v} + \smallmat{\xi_\sigma\\ \xi_\phi-f_c\\ \xi_v}^T P (A \xi - b)=- c \xi_v^2$, that satisfies \eqref{eq:dVdt} in the interval $[0,T]$. Parallel computations hold for $k=-1$. For $k=0$, $V_0(\xi(t))=\frac{k_v}{k_i}\xi_v(t)^2$ so that $\frac{d}{dt} V_0(\xi(t))=2\frac{k_v}{k_i}\xi_\sigma \dot{\xi}_\sigma=0\le 0 = c \xi_v(t)^2$.
\end{proofof}

\end{document}